\theoremstyle{thmstyleone}%
\newtheorem{theorem}{Theorem}[section]
\newtheorem{proposition}[theorem]{Proposition}%
\newtheorem{lemma}{Lemma}
\theoremstyle{thmstyletwo}%
\theoremstyle{thmstylethree}%
\newtheorem{definition}{Definition}%
\def\@fnsymbol#1{\ensuremath{
  \ifcase#1\or *\or \dagger\or \ddagger\or \mathsection\or \mathparagraph\or \|\or **\or \dagger\dagger \or \ddagger\ddagger \else\@ctrerr\fi}}
\numberwithin{theorem}{section}
\numberwithin{Claim}{section}
\numberwithin{example}{section}
\numberwithin{definition}{section}
\numberwithin{remark}{section}
\numberwithin{equation}{section}
\let\OLDleft\left
\let\OLDright\right
\renewcommand{\left}{\mathopen{}\mathclose\bgroup\OLDleft}
\renewcommand{\right}{\aftergroup\egroup\OLDright}
\let\OLDland\land
\renewcommand{\land}{\:\OLDland\:}
\let\OLDlor\lor
\renewcommand{\lor}{\:\OLDlor\:}
\let\OLDforall\forall
\renewcommand{\forall}{\OLDforall\:}
\let\OLDexists\exists
\renewcommand{\exists}{\OLDexists\,}
\definecolor{MyGreen}{rgb}{0, 0.7, 0}
\definecolor{MyRed}{rgb}{0.8, 0, 0}
\Crefname{Claim}{Claim}{Claims}
\newcommand{\pbDef}[3]{%
\noindent
\begin{center}
\begin{boxedminipage}{0.98 \columnwidth}
#1\\[5pt]
\begin{tabular}{l p{0.75 \columnwidth}}
Input: & #2\\
Question: & #3
\end{tabular}
\end{boxedminipage}
\end{center}
}
	\newcommand{\pref}{\succsim\xspace}
\newcommand{\new}[1]{\textcolor{black}{#1}}
\begin{document}

\title[Ex-post Stability under Two-Sided Matching]{Ex-post Stability under Two-Sided Matching:\\ Complexity and Characterization}


\author*[1]{\fnm{Haris} \sur{Aziz}}\email{haris.aziz@unsw.edu.au}

\author[2]{\fnm{Péter} \sur{Biró}}\email{biro.peter@krtk.hun-ren.hu}

\author[2,3]{\fnm{Gergely} \sur{Csáji}}\email{csaji.gergely@krtk.hun-ren.hu}
\equalcont{Main contributor of theoretical results.}

\author[1]{\fnm{Ali} \sur{Pourmiri}}\email{alipourmiri@gmail.com}

\affil*[1]{\orgdiv{School of Computer Science and Engineering}, \orgname{UNSW Sydney},  \city{Sydney}, \postcode{NSW 2052}, \country{Australia}}

\affil[2]{\orgdiv{Institute of Economics}, \orgname{HUN-REN KRTK}, \orgaddress{ \city{Budapest}, \postcode{1097}, \country{Hungary}}}

\affil[3]{\orgdiv{Department of Operations Research}, \orgname{ELTE}, \orgaddress{ \city{Budapest}, \postcode{1117}, \country{Hungary}}}


\abstract{
We study the problem of determining whether a given random matching can be implemented as a lottery over weakly stable deterministic matchings — a property known as ex-post stability. This concept arises in randomized allocation mechanisms such as school choice, where stability in each realized outcome is essential for fairness. Despite its importance in practice, the computational complexity of verifying ex-post stability has remained unresolved.
We settle this question by showing that testing ex-post stability is NP-complete, even under highly restricted conditions -- specifically, when both sides have dichotomous preferences or one of the sides has strict preferences. On the positive side, we present an integer programming formulation that finds a decomposition of a random matching with maximum weight on stable matchings.
We also consider stronger versions of ex-post stability (in particular robust ex-post stability and ex-post strong stability) and prove that they can be tested in polynomial time.}

\keywords{Matching theory, Stability Concepts, Fairness, Random Assignment}

\maketitle

\begin{quote}
	``\textit{One important question is about the characterization of ex-post stability when matchings are allowed to be random}''---\citet{KeUn15a}.
\end{quote}

\section{Introduction}

Randomized mechanisms are increasingly gaining importance in modern resource allocation systems, and in applications such as school choice, housing assignment, and resident matching. These mechanisms often generate a \emph{random matching}—a fractional assignment that encodes probabilities over deterministic assignments. In practice, this outcome must be \emph{implemented} as a lottery over feasible deterministic matchings. A central question in this context is whether such a random matching can be decomposed into deterministic matchings that satisfy a desired structural constraint, such as stability. If this is possible, the random matching is said to be \emph{ex-post stable}.

In this paper, we study the computational complexity of determining whether a given random matching is ex-post stable with respect to weak stability, a classical fairness notion in two-sided matching markets. While the problem is well-known to be solvable when preferences and priorities are strict and ties are absent, the complexity of this decomposition task has remained unresolved in general, especially when agents or items have indifferences. We prove that testing ex-post stability is NP-complete, even when the input matching is given explicitly and preferences are highly restricted (e.g., short, or complete and dichotomous). Our results resolve an open question posed by Kesten and Ünver~\cite{KeUn15a}.

\smallskip
\noindent
\textbf{Motivation from Practice.}
The question we address is not merely theoretical—it arises in real-world applications where decision-makers aim to combine \emph{fairness} and \emph{efficiency} under structural constraints. For example, many school choice systems operate by running the deferred acceptance (DA) algorithm on a priority profile obtained by breaking ties using a random lottery~\cite{AtilaSonmez2003}. This mechanism is widely used, including in New York City~\cite{Atila_etal2005NY}. However, it has been criticized for inefficiency. Erdil and Ergin~\cite{ErdilErgin2008} demonstrated that the standard DA-with-lottery solution can be strictly Pareto improved—2--3\% of students could receive better outcomes without harming anyone—if only the underlying coarse priorities (e.g. based on catchment areas and siblings) are enforced, and lottery tie-breaking is ignored in the implementation. Yet their proposed modification was rejected by policymakers, partly due to its vulnerability to manipulation~\cite{Atila_etal2009NY}.

This tension—between improving welfare and maintaining implementability—has sparked a line of work on \emph{smart lotteries}, which seek to improve probabilistic assignments while ensuring that each realized matching satisfies desirable properties. Kesten~\cite{Kesten2010} proposed the Efficiency-Adjusted Deferred Acceptance mechanism (EADAM), which relaxes stability to achieve better welfare. More broadly, researchers have studied Pareto improvements in randomized allocation systems, often measured in terms of \emph{stochastic dominance} (SD). In object allocation problems, Bogomolnaia and Moulin~\cite{BogomolnaiaMoulin2001} showed that the standard Random Priority (or Random Serial Dictatorship) mechanism is not always SD-efficient, and introduced the Probabilistic Serial rule as a better alternative—though it is not strategyproof.

Once a welfare-improving random assignment has been computed (e.g., via Probabilistic Serial), the next challenge is to determine whether this assignment can be decomposed into deterministic matchings that satisfy the desired structural constraints. For instance, in school choice, one might ask whether the improved probabilistic assignment is consistent with weak stability under coarse priorities. This decomposition question is not only natural, but essential for implementation—if the answer is no, then some realized matchings may violate fairness or stability conditions.

\smallskip
\noindent
\textbf{Smart Lotteries in Practice.}
The decomposition of random matchings into structured deterministic outcomes has been deployed in real systems. In Israel’s resident allocation program, a random priority mechanism is used as a starting point, but is followed by a two-step process: first, the assignment is improved to increase welfare, and then the improved fractional matching is decomposed into deterministic matchings that preserve critical structure—such as ensuring that couples are placed in nearby hospitals~\cite{Bronfman_etal2018}. Similar approaches have been proposed for school choice: Ashlagi and Shi~\cite{AshlagiShi2014} suggested smart lotteries that match students from the same neighborhood to the same school to promote community cohesion and reduce transportation costs. \citet{DGHL23} also study decompositions designed to reduce the risk of undesirable matchings in the final lottery draw. These real-world applications all rely on the ability to decompose a fractional matching into feasible deterministic matchings that satisfy constraints like proximity, size or stability. 

Our work continues this line of research: given a probabilistic assignment—perhaps from the output of some mechanism, or obtained through a welfare-enhancing process—we ask whether it can be implemented via a lottery over weakly stable matchings. Our results show that this decomposition problem is NP-hard, even under restricted preference structures such as dichotomous preferences and priorities. This hardness result explains why a simple, tractable characterization of ex-post stability has eluded prior work. As \citet{Woeg03a} notes, the combinatorics of NP-complete problems are often messy and resist concise characterizations.

\smallskip
\noindent
\textbf{Our Contributions.}
\begin{itemize}
    \item 
We formally define the ex-post stability verification problem and prove three main complexity results:
\begin{enumerate}
  \item Deciding whether a given random matching is ex-post stable is NP-complete, even if agents have strict preferences and items have dichotomous priorities.
  \item The problem remains NP-complete when both sides have dichotomous preferences and priorities.
  \item It also remains NP-complete when both sides have preferences and priorities of length bounded by three.
\end{enumerate}

\item Despite this negative result, we propose a tool for a potential workaround for this complexity barrier: we formulate the decomposition problem as an integer program (IP). Integer programs can usually be solved in practice for moderate instance sizes. If no decomposition to weakly stable matchings exists, our IP can still compute one that minimizes the probability of selecting an unstable matching.

\item We also analyze stronger variants of ex-post stability, such as ex-post \emph{strong stability} (requiring the matchings in the decomposition to be strongly stable instead of weakly stable) and \emph{robust} ex-post stability (requiring all decompositions to be into weakly stable matchings), and provide positive algorithmic results for testing and decomposing under these stricter notions.

\end{itemize}

To sum up, our results provide theoretical foundations for a problem arising in several algorithmic allocation settings, explain known barriers to implementation, and offer a potential computational pathway for decomposing randomized outcomes under the structural constraint of ex-post stability. Our complexity results are summarized in Table~\ref{table:summary}.

\begin{table*}[t]
    \caption{Complexity of testing stability in various settings. The first three rows highlight the complexity of testing ex-post stability if the preferences/priorities are complete and either strict or dichotomous. The $\le 3,\le 3$ row highlights the complexities for lists of length at most 3.}
	\centering
    
	\scalebox{0.77}{
	\begin{tabular}{lllll}
		\toprule
\textbf{Preferences}& \textbf{ex-post stability} & \textbf{ex-post strong stability}& \textbf{robust ex-post stability}  \\
		\midrule
 strict, strict & P \citep{TeSe98a} &P (Th \ref{th:strongfrac}) & P (Th \ref{th:robust})    \\
dichotomous, dichotomous &NP-c (Th \ref{th:bothdich})& P (Th \ref{th:strongfrac}) & P (Th \ref{th:robust})   \\
strict, dichotomous  &NP-c (Th \ref{ex-post-verif:strict-dich})& P (Th \ref{th:strongfrac}) & P (Th \ref{th:robust})    \\
$\le 3$ long, $\le 3$ long & NP-c (Th \ref{thm:deg3})
& P (Th \ref{th:strongfrac}) & P (Th \ref{th:robust})\\
no restrictions &NP-c (Th \ref{th:bothdich}) & P (Th \ref{th:strongfrac}) & P (Th \ref{th:robust})   \\
		\bottomrule
	\end{tabular}
	}

	\label{table:summary}
\end{table*}

%

\subsection{Related work}

The theory of stable matchings has a long history with several books written on the topic~\citep{GuIr89a,Manl13a,RoSo90a}.
In the theory of matching under preferences, \citet{RRV93a} presented several results regarding the stable matching polytope (when there are no ties) that also provide insights into random stable matching. \citet{TeSe98a} presented another paper that provides connections between linear programming formulations and stable matchings.

While the stable marriage problem was originally introduced only for strict preferences \cite{GaSh62a}, for which a stable matching always exists and a simple and elegant algorithm can always find it, it turned out later that ties tend to have quite a big impact on the complexity of several related problems. The first paper to introduce ties was the paper of Irving \cite{Irving}, who gave efficient algorithms to find weakly, strongly and super stable matchings respectively, which correspond to the three natural generalization of stability depending one whether two, one, or zero agents must strictly improve in a blocking pair.
Manlove et al. \cite{manlovehard} showed that several problems related to stable marriage becomes NP-hard if ties are included, the most famous one being the problem of finding a weakly stable matching of maximum size.

\citet{BoMo01a} presented a seminal paper on random matchings when the items do not have any priorities. In this paper, we focus on the setting when items also have priorities. Our focus is also on stability concepts. \citet{BoMo04a} then considered two-sided matching under dichotomous preferences.

\citet{KeUn15a} initiated a mechanism design approach to the stable random matching problem where they explore the compatibility of stability and efficiency and propose algorithms that satisfy ex-ante stability -- a property that is stronger than ex-post stability, which requires that no pair of agents would mutually want to increase the probability on an edge, potentially by decreasing the probability on other, worse adjacent edges. We also note ex-ante stability is easy to verify by iterating through each edge, and checking if it blocks in the above sense.
\citet{Afac18a} considered a more general model in which objects have probabilities for prioritizing one agent over another. They present a weak stability concept called claimwise stability and propose an algorithm to achieve it.
\citet{AzKl19b} explore a hierarchy of stability concepts when considering random matchings and explored their relations and mathematical properties. \citet{CFKV21a} considered stability under cardinal preferences.
\citet{AzBr22c} presented a general random allocation algorithm that can handle general feasibility constraints including those that are as a result of imposing stability concepts.

\citet{CRS20a} considered the classical fractional stability concept as well as a concept based on cardinal utilities~\citep{CFKV21a} and presented additional complexity results when stability is combined with other objectives such as maximum size or maximum welfare.
\citet{AMXY15a} examined the complexity of testing ex-post Pareto optimality and proved that the problem is coNP-complete.




\section{Preliminaries}

We use the formal model presented by \citet{AzKl19b}.
We consider the classic matching setting in which there are $n$ agents
and $n$ items. The agents have preferences over items and items have priorities over agents.
The preference relation of an agent $i\in N$  over items is denoted by $\pref_i$ where $\succ_i$ denotes the strict part of the preference and $\sim_i$ denotes the indifference part.
The priority relation of  an item $o\in O$  over agents is denoted by $\pref_o$ where $\succ_o$ denotes the strict part of the preference and $\sim_o$ denotes the indifference part.
We say that a preference or priority relation is \textit{dichotomous} if the items (agents respectively) are partitioned in at most two indifference classes.
We say that a preference or priority relation is \textit{strict} if there is no indifference between any two items  (agents respectively).

A \textbf{\textit{random matching}} $p$ is a bistochastic $n\times n$ matrix $[p(i,o)]_{i\in N,o\in O}$, i.e.,
\begin{equation}\label{LE1}\mbox{for each pair }(i,o)\in N\times O,\ p(i,o)\geq 0,\end{equation}
\begin{equation}\label{LE2}\mbox{for each }i\in N,\ \sum_{o\in O}p(i,o)=1,\mbox{ and }\end{equation}
\begin{equation}\label{LE3}\mbox{for each }o\in O,\ \sum_{i\in N}p(i,o)=1.\end{equation}
Random matchings are often also referred to as \textit{fractional matchings} \citep{TeSe98a}.
For each pair $(i,o)\in N\times O$, the value $p(i,o)$ represents the probability of item $o$ being matched to agent $i$. 
A random matching $p$ is \textbf{\textit{deterministic}} if for each pair $(i,o)\in N\times O$, $ p(i,o) \in \{0,1\}$. Alternatively, a deterministic matching is an integer solution to linear system (\ref{LE1}), (\ref{LE2}), and (\ref{LE3}).\medskip

By the result of Birkhoff \cite{Birk46a}, each random matching can be represented as a convex combination of deterministic matchings: a \textbf{\emph{decomposition}} of a random matching $p$ into deterministic matchings $M_j$ ($j\in \{1,\ldots,k\}$) equals a sum $p=\sum_{j=1}^k \lambda_jM_j$ such that for each $j\in \{1,\ldots,k\}$, $\lambda_j\in [0,1]$ and $\sum_{j=1}^k\lambda_j=1$. Such a decomposition can be found in polynomial time via the Birkhoff-Neumann algorithm~\cite{Birk46a}.
{We say that a deterministic matching $M$ is \textbf{\emph{consistent}} with random matching $p$ if $M(i,o)=1$ implies $p(i,o)>0$. }

\begin{definition}[\textbf{Stability for deterministic matchings}]\label{def:weakstab}
\normalfont A deterministic matching $M$ has \textbf{\emph{no justified envy}} or is \textbf{\emph{(weakly) stable}} if there exists no agent $i$ who is matched to item $o'$ but prefers item $o$ while item $o$ is matched to some agent $j$ with lower priority than $i$, i.e., there exist no $i,j\in N$ and no $o,o'\in O$ such that $M(i,o')=1$, $M(j,o)=1$, $o\succ_i o'$, and $i\succ_{o}j$. Such a pair, if it exists, is referred to as a \emph{\textbf{(strongly) blocking pair to $M$.}}
\end{definition}

Equivalently, a deterministic matching $p$ is weakly stable if it satisfies the following inequalities: for each pair $(i,o)\in N\times O$,
\begin{equation}\label{LE4W}
\sum_{o':o'\succsim_i o}p(i,o')+ \sum_{j:j\succsim_{o}i}p(j,o)-p(i,o)\geq 1.
\end{equation}

If one breaks all preference and priority ties, then the well-known deferred-acceptance algorithm \citep{GaSh62a} computes a deterministic matching that is weakly stable.

Since weak stability is the most widely used stability concept if ties are present, we will often refer to weak stability as simply stability and a strongly blocking pair as a blocking pair.
\medskip


We now introduce the central concept of this paper, ex-post stability.

\begin{definition}[\textbf{Ex-post stability}]
\normalfont	A random matching $p$ is \emph{\textbf{ex-post stable}} if it can be decomposed into deterministic weakly stable matchings. That is, there exists $\lambda_1,\dots, \lambda_k\ge 0$ and deterministic weakly stable matchings $M_1,\dots, M_k$ consistent with $p$, such that $\sum \lambda_l =1 $ and $p = \sum \lambda_lM_l$.\label{def:expoststability}
\end{definition}

A similar stability notion is fractional stability.
\begin{definition}[\textbf{Fractional stability and violations of fractional stability}]
\normalfont	A random matching $p$ is \textit{\textbf{fractionally stable}} if for each pair $(i,o)\in N\times O$,
\begin{equation}\tag{\ref{LE4W}}
\sum_{o':o'\succsim_i o}p(i,o')+ \sum_{j:j\succsim_{o}i}p(j,o)-p(i,o)\geq 1,
\end{equation}
\end{definition}


Note that fractional stability does not imply ex-post stability in the presence of ties~\citep{AzKl19b}.

We first mention that when both preferences and priorities are strict, then ex-post stability admits both a simple characterization, a concise geometric description, and also a polynomial-time algorithm to test ex-post stability.

\begin{proposition}[\citep{TeSe98a},\citep{AzKl19b}]\label{prop:strictinP}
	Ex-post stability can be tested in linear-time if preferences on both sides are strict. Furthermore, if a given a random matching is ex-post stable, there exists a polynomial-time algorithm to represent the random matching as a lottery over deterministic stable matchings.
	\end{proposition}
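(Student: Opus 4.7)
The plan rests on Rothblum's theorem \cite{RRV93a}: under strict preferences and strict priorities, the polytope $\mathcal{P}\subset\mathbb{R}^{N\times O}$ carved out by the bistochastic constraints (\ref{LE1})--(\ref{LE3}) together with the fractional stability inequalities (\ref{LE4W}) is integral, and its vertices are exactly the indicator vectors of (deterministic) weakly stable matchings. Granting this, a random matching $p$ is ex-post stable if and only if it satisfies (\ref{LE4W}) for every pair $(i,o)$: the ``only if'' direction is immediate since (\ref{LE4W}) is preserved under convex combinations, while the ``if'' direction says any $p\in\mathcal{P}$ is a convex combination of the integer vertices of $\mathcal{P}$, i.e.\ of weakly stable matchings.

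Testing is then straightforward: for each agent $i$, compute the tail sums $T_i(o):=\sum_{o'\succsim_i o}p(i,o')$ by traversing $i$'s preference list once, and symmetrically compute $S_o(i):=\sum_{j\succsim_o i}p(j,o)$ by traversing each item's priority list. This takes $O(n^2)$ time in total (linear in the input encoding), after which each of the $n^2$ inequalities $T_i(o)+S_o(i)-p(i,o)\ge 1$ is checked in $O(1)$.

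To produce an explicit decomposition when $p$ is fractionally stable, I would invoke the constructive algorithm of Teo and Sethuraman \cite{TeSe98a}. Treating $t\in(0,1]$ as a parameter and using strictness, define $\mu_t(i)$ to be the unique item $o$ such that $T_i(o)\ge t$ while $T_i(o')<t$ for the item $o'$ immediately preferred to $o$ by $i$. The fractional stability inequalities, combined with the bistochastic constraint $\sum_i p(i,o)=1$, force $\mu_t$ to be injective on agents and to be weakly stable for all but finitely many $t$; moreover $t\mapsto\mu_t$ is piecewise constant with at most $n^2$ breakpoints coming from the tail sums $T_i(o)$. Collecting the interval lengths as weights yields a convex decomposition of $p$ into at most $n^2$ weakly stable matchings. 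The only genuine obstacle is verifying that $\mu_t$ is simultaneously a matching and stable: injectivity comes from comparing the agent-side and item-side cumulative masses under strictness, and stability follows directly from (\ref{LE4W}) applied to each pair $(i,o)$ with $t\in (T_i(o'),T_i(o)]$. Beyond this technical check, the proposition is essentially a packaging of Rothblum's integrality theorem and the Teo--Sethuraman construction.
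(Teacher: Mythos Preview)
Your proposal is correct and follows essentially the same route as the paper: the paper's proof simply observes that under strict preferences ex-post stability coincides with fractional stability (citing \cite{AzKl19b}, which in turn rests on the Rothblum/Vande Vate integrality result you invoke), checks the linear inequalities (\ref{LE4W}), and then appeals to \cite{TeSe98a} for the decomposition. Your write-up is a more explicit unpacking of those two citations rather than a different argument; one small quibble is that $\mu_t$ is in fact a weakly stable matching for \emph{every} $t\in(0,1]$, not merely all but finitely many, so the ``for all but finitely many $t$'' qualifier is unnecessary.
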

	\vspace{-12pt} \begin{proof}
		Under strict preferences and priorities, ex-post stability and fractional stability are equivalent~\citep{AzKl19b}. So we just need to check for the linear constraints capturing fractional stability. In the strict preference  setting, any fractional stable matching (equivalently ex-post stable matching) can be efficiently decomposed to a convex combination of deterministic stable matchings~\citep{TeSe98a}.
		\end{proof}

\emph{Robust ex-post stability}, is a natural strengthening of ex-post stability~\citep{AzKl19b}.

\begin{definition}[\textbf{Robust ex-post stability}]
\normalfont A random matching $p$ is \emph{\textbf{robust ex-post stable}} if all of its decompositions are into deterministic and weakly stable matchings.\label{def:robustexpoststability}
\end{definition}
	
It follows easily that if we restrict attention to deterministic matchings, then both ex-post stability and robust ex-post stability coincide with weak stability and no envy (Definition~\ref{def:weakstab}).

Finally, a different strengthening can be obtained if we require the matchings in the support to be strongly stable, instead of weakly stable.

 \begin{definition}[\emph{\textbf{Strong stability}}]\label{def:strongstab}
 A deterministic matching $M$ is \emph{\textbf{strongly stable}} if there is no pair $(i,o)\notin M$, such that both $i$ and $o$ weakly prefer each other to their match, and at least one of them strictly prefers it. Such a pair is said to be a \emph{\textbf{weakly blocking pair to $M$}}. Equivalently, it
 satisfies the following two conditions.
 \begin{enumerate}
 \item  $\sum_{o'\succ_i o}M(i,o')+ \sum_{i'\succ_o i}M(i',o)+\sum_{o'\sim_{i} o}M(i,o')\ge 1$
 \item $\sum_{o'\succ_i o}M(i,o')+ \sum_{i'\succ_o i}M(i',o)+\sum_{i' \sim_{o} i}M(i',o)\ge 1$.
 \end{enumerate}
  \end{definition}

  Clearly strong stability implies weak stability.
Note that under strict preferences, strong stability and weak stability are equivalent.
A strongly stable matching may not exist but there is a linear-time algorithm to check if it exists or not and to find one if it exists~\citep{KMMP07a}.

The notion of strong stability for integral matchings lends itself to two natural stability concepts for the case of random matchings.

 \begin{definition}[\emph{\textbf{Ex-post strong stability}}]
A random matching $p$ is \emph{\textbf{ex-post strongly stable}} if  it can represented as a convex combination of integral strongly stable matchings.
 \end{definition}

 \begin{definition}[\emph{\textbf{Fractional strong stability}}]
 A random matching $p$ is \emph{\textbf{fractional strong stable}} if it
 satisfies  the following two conditions.
 \begin{enumerate}
 \item  $\sum_{o'\succ_i o}p(i,o')+ \sum_{i'\succ_o i}p(i',o)+\sum_{o'\sim_{i} o}p(i,o')\ge 1$
 \item $\sum_{o'\succ_i o}p(i,o')+ \sum_{i'\succ_o i}p(i',o)+\sum_{i' \sim_{o} i}p(i',o)\ge 1$
 \end{enumerate}

 \end{definition}

 Clearly ex-post strong stability implies ex-post stability.

%
	
	%
	%

\section{Ex-post stability: Complexity under the presence of ties}


	
	
	 %

Next, we move to the general setting in which there may be ties in the preferences or priorities. Our first observation is as follows.
	 	Suppose that $C$ denotes the set of all stable matchings of a given instance with weak preferences. The convex hull of all points in $C$ is a subset of the polytope   defined by the inequalities \eqref{LE1}-\eqref{LE4W}. 
        
\subsection{Complete preferences}
        
        Next, we prove that  checking whether a random matching $p$ is ex-post stable is NP-complete.
 		
\begin{theorem}
\label{ex-post-verif:strict-dich}
Deciding whether a random matching $p$ is ex-post stable is NP-complete, even if one side's preferences/priorities are strict and the other's are dichotomous and both are complete.
\end{theorem}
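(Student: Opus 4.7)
The plan is to prove NP-membership via a Birkhoff--von~Neumann/Carath\'eodory argument and then reduce from an NP-hard problem on stable matchings with one-sided ties. For membership, the convex hull of deterministic weakly stable matchings is a polytope in $[0,1]^{n\times n}$, so by Carath\'eodory's theorem any ex-post stable $p$ admits a decomposition into at most $n^2+1$ such matchings. A nondeterministic verifier then guesses polynomially many $0/1$ matchings together with rational weights, checks that the weighted sum equals $p$, and verifies weak stability of each guessed matching via the linear criterion~\eqref{LE4W}; all of this runs in polynomial time.

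For hardness I would reduce from a known NP-hard problem on weakly stable matchings with one-sided dichotomous ties, such as deciding whether an SMTI instance with strict preferences on one side and dichotomous preferences on the other admits a perfect weakly stable matching (or a weakly stable matching of size at least $k$), which is NP-hard by the work of Manlove et~al.\ and its strengthenings. Given such a source instance $I$, I would construct an enlarged instance $I'$ by padding with calibrated dummy agents and items and then exhibit an explicit bistochastic matrix $p$ on $I'$ whose ex-post stability corresponds precisely to the existence of the required solution in $I$. The preference restriction would be preserved by giving the strict side of $I$ strict completions and the dichotomous side purely dichotomous completions.

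The core gadget would be a \emph{forced-choice} $2\times 2$ sub-block in which two agents split probability $1/2$ over two items, so that every deterministic matching in the support of any decomposition of $p$ must commit to one of two alternatives (true/false for a variable, or matched/unmatched for a vertex of $I$). Feasibility-gadgets would then be attached so that a blocking pair appears in an integral matching $M$ in the support exactly when the choice pattern encoded by $M$ fails to induce a weakly stable solution of $I$. A decomposition of $p$ into weakly stable matchings then yields a distribution over valid solutions of $I$, and conversely any single solution of $I$ can be lifted to a family of weakly stable matchings of $I'$ whose convex combination equals $p$.

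The main obstacle, and where I expect the construction to be delicate, is the coupling between gadgets. Independent $2\times 2$ blocks of value $1/2$ decompose completely independently, so if a single source-object (a variable occurring in several clauses, or a vertex of $I$ participating in several edges) is represented by multiple gadgets, they must be tied together by additional fractional structure that prevents a decomposition from pairing a ``true'' resolution in one copy with a ``false'' resolution in another. Engineering this coupling so that it remains bistochastic, respects the strict/dichotomous restriction on the two sides, and produces a blocking pair precisely when the projection onto $I$ is non-stable is the technical heart of the reduction; once it is in place, the two implications (stable decomposition $\Leftrightarrow$ solution of $I$) follow from the gadget design.
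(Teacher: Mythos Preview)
Your NP-membership argument via Carath\'eodory is correct and matches the paper. The hardness part, however, is only an outline: you name a source problem, describe $2\times 2$ choice blocks, and then explicitly flag the coupling between gadgets as ``the main obstacle'' without resolving it. That coupling is not a detail to be filled in later; it is the entire reduction. As you yourself note, independent $1/2$--$1/2$ blocks decompose freely, so nothing in what you have written forces different matchings in a decomposition of $p$ to encode consistent choices, and hence nothing yet ties ex-post stability of $p$ to solvability of $I$. A second, related gap: in your scheme \emph{every} matching in the support must encode a valid solution of $I$, and those solutions must additionally average to the prescribed $1/2$ marginals. This is not obviously equivalent to ``$I$ has at least one solution''; you would need to argue that a single solution of $I$ can be spread into a balanced family of weakly stable matchings of $I'$. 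Finally, the assertion that perfect weakly stable matching is NP-hard with one side strict and the other \emph{dichotomous} is not something Manlove et~al.\ directly establish (their restrictions are on tie length, not on the number of indifference classes), so the source problem itself needs justification.

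It is worth contrasting with the paper's mechanism, because it sidesteps exactly the obstacle you identify. The paper reduces from \textsc{X3C} and does \emph{not} try to make every matching in the decomposition encode a solution. Instead it plants a single ``trigger'' edge $(s_1,o_1)$ with $p(s_1,o_1)=1/3>0$: any decomposition must use at least one deterministic matching $M$ containing this edge, and in $M$ agent $s_1$ is so badly placed that weak stability cascades through the $y$-items and the set gadgets to force an exact cover. The remaining weight of $p$ is then absorbed by matchings in which $s_1$ gets a top item, where stability is easy. This ``one constrained matching, the rest free'' design is what makes the converse direction tractable and avoids the inter-gadget coupling problem entirely; your all-matchings-constrained approach would need a genuinely new idea to achieve the same.
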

\vspace{-12pt} \begin{proof}

    To show NP membership, it is enough to show that there always exist a polynomial size witness for yes instances. This is true, since if a random matching $p$ is in the convex hull of the characteristic vectors of weakly stable matchings, then by Caratheodory's Theorem it can be expressed as a convex combination of at most $n^2+1$ weakly stable matchings also.

    To show NP-hardness, we reduce from \textsc{exact cover by 3-sets} (\textsc{X3C}). In this problem, we are given a family of 3-sets $C_1,..,C_{3n}$ over elements $a_1,..,a_{3n}$ and the question is whether there is an exact 3-cover among the 3-sets. This problem is NP-complete, even if each element $a_i$ is contained in exactly three 3-sets~\citep{GaJo79a}.

\pbDef{\textsc{X3C}}
{
A finite set $X=\{a_1,..,a_{3n}\}$ containing exactly 3n elements; a collection $C=\{C_1,..,C_{3n}\}$ of subsets of $X$ each of which contains exactly 3 elements with the property that each $a_i$ appears in exactly 3 sets.
}
{Does $C$ contain an exact cover for $X$, i.e. a sub-collection of 3-element sets $D=(D_1,...,D_n)$ such that each element of $X$ occurs in exactly one subset in $D$?}

    Let $I$ be such an instance of \textsc{X3C}. We construct an instance $I'$ for our problem as follows.

    For each element $a_i$ we add an item $a_i$. For each set $C_j$, we add 6 items $x_j^1,x_j^2,x_j^3$ and $y_j^1,y_j^2,y_j^3$ and also 6 agents $c_j^1,c_j^2,c_j^3$ and $d_j^1,d_j^2,d_j^3$. Then, we add $3n$ collector agents $z_1,\dots,z_{3n}$.  Finally, we add two more items $o_1$ and $o_2$ and two more agents $s_1$ and $s_2$.
    Let $C_j=\{ a_{j_1},a_{j_2},a_{j_3} \}$, $j_1<j_2<j_3$ be the $j$-th set in $I$. We refer to $a_{j_1}$ as the first element in $C_j$, $a_{j_2}$ as the second and $a_{j_3}$ as the third.

    Let the preferences be the following. For the agents:
    \begin{center}
    \begin{tabular}{rl}
      $c_j^l:$ & $a_{j_l}, y_j^l, x_j^{l-1}, x_j^l, others$ \\
      $d_j^l:$ & $x_j^l, y_j^l, others$ \\
      $s_1: $ & $o_2, (Y),o_1,others$\\
      $s_2:$ & $o_1,o_2,others$\\
      $z_j:$ & $(X),others$
      \end{tabular}
    \end{center}
      where $j\in [3n]$, $l\in [3]$ taken ($mod \; 3)$ and $Y=\{ y_j^l | \; j\in [3n], \; l\in [3]\} $, $X=\{ x_j^l | \; j\in [3n], \; l\in [3]\} $ and $(S)$ for a set $S$ indicates that the elements of $S$ are ranked in an arbitrary strict order.

    For the items, we have:
   \begin{center}
    \begin{tabular}{rl}
        $a_i:$ & $[c^1(a_i), c^2(a_i), c^3(a_i)], [others]$ \\
        $x_j^l:$ & $[c_j^l,c_j^{l+1}],[others]$ \\
        $y_j^l:$ & $[d_j^l,s_1],[others]$ \\
        $o_1:$ & $[every \; agent]$ \\
        $o_2:$ & $[every \; agent]$
    \end{tabular}
    \end{center}
    where $Z=\{ z_1,..,z_{3n}\}$, $i\in [3n]$, $j\in [3n]$ , $l\in [3]$ and $c^k(a_i)$ is $c_j^l$, iff the $k$-th appearance of $a_i$ is in the $l$-th position of the set $C_j$ and the brackets indicate indifferences.
    Let the random matching $p$ be:
    \begin{enumerate}
        \item $p(c^k(a_i),a_i)=\frac{1}{3}$ for $i\in [3n]$, $k\in [3]$
        \item $p(c_j^l,x_j^l)=p(c_j^l,y_j^l)=\frac{1}{3}$, $j\in [3n]$, $l\in [3]$
        \item $p(d_j^l,x_j^l)=\frac{1}{3}$, $p(d_j^l,y_j^l)=\frac{2}{3}$, $j\in [3n]$, $l\in [3]$
        \item $p(s_1,o_1)=p(s_2,o_2)=\frac{1}{3}$, $p(s_2,o_1)=p(s_1,o_2)=\frac{2}{3}$
        \item $p(z_k,x_j^l)=\frac{1}{9n}$ for each $j,k\in [3n]$, $l\in [3]$.
    \end{enumerate}
    This completes the construction of $I'$. The construction is illustrated in Figures \ref{fig:cover_all} and \ref{fig:noncover_all}.
   \begin{figure}

            \centering
            \includegraphics[width=0.8\textwidth]{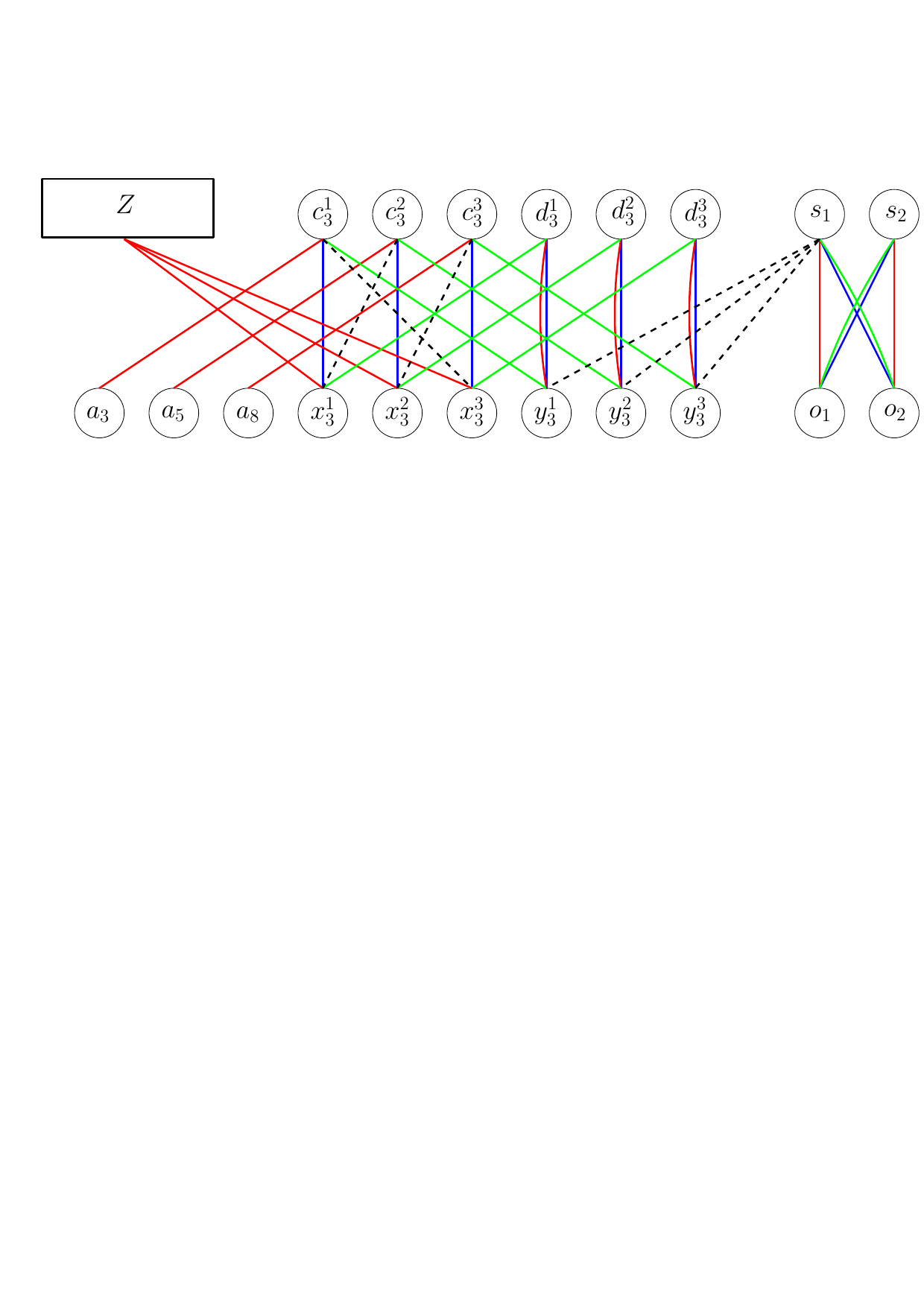}
            \caption{The gadget for a set $C_3=\{ a_3,a_5,a_8\}$ with the important edges in Theorem \ref{ex-post-verif:strict-dich}. The red, blue and green edges are the edges of $M_1^k,M_2^k$ and $M_3^k$ respectively, when the set $C_3$ is part of the exact 3-cover. The $p$ value on each edge is $\frac{1}{3}$ times the number of colors the edge has. The dotted black lines represent the edges with $p$ value 0.}
            \label{fig:cover_all}
        \end{figure}

        \begin{figure}

            \centering
            \includegraphics[width=0.8\textwidth]{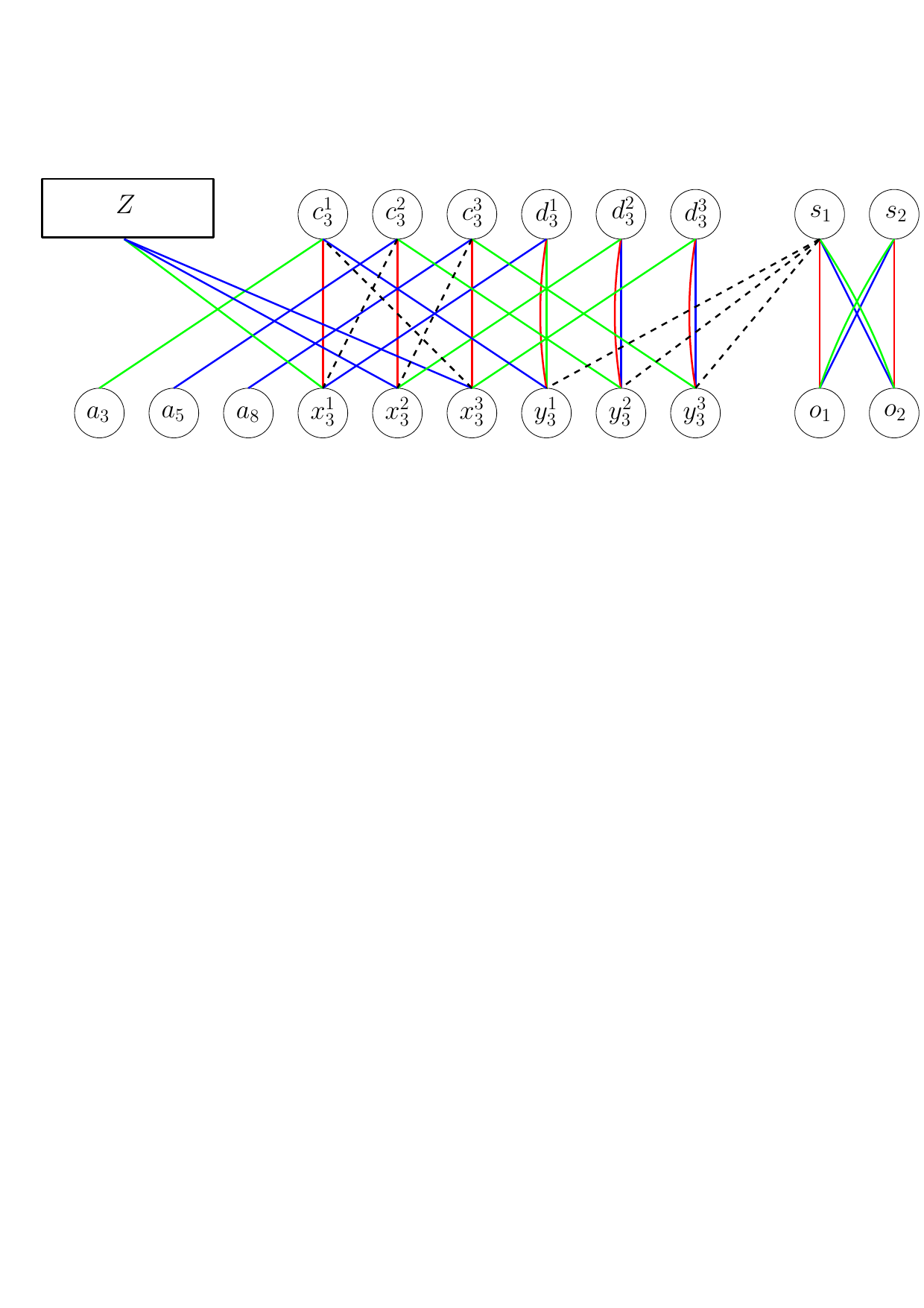}
            \caption{The gadget for a set $C_3=\{ a_3,a_5,a_8\}$ with the important edges in Theorem \ref{ex-post-verif:strict-dich}. The red, blue and green edges are the edges of $M_1^k,M_2^k$ and $M_3^k$ respectively, when the set $C_3$ is NOT part of the exact 3-cover. The $p$ value on each edge is $\frac{1}{3}$ times the number of colors the edge has. The dotted black lines represent the edges with $p$ value 0.}
            \label{fig:noncover_all}
        \end{figure}

\new{As the proof is quite technical, we first provide the intuition behind the reduction. On one hand, we have a special agent $s_1$, who must get a bad partner in at least one of the matchings, which forces all items of one type (the $y_i^j$-s) to get an agent with highest priority by weak stability. Then, the gadgets for each set are constructed such that if all the $y_i^j$ items are matched to their highest priority item $d_i^j$, then either all three element objects corresponding to the elements of the set are matched to the gadget, or none of them are.  }

We proceed to the formal proof.

    \begin{proposition}
    If $p$ is ex-post stable, then there exists an exact 3-cover.
    \end{proposition}
    \vspace{-12pt} \begin{proof}
        If $p$ can be written as a convex combination of weakly stable matchings, then, because $p(s_1,o_1)>0$, there has to be one matching in which the edge $(s_1,o_1)$ is included. Let this matching be $M$.

        $M$ is weakly stable, therefore $s_1$ does not block with any items from $Y$. This can only happen, if each item from $Y$ is matched to someone with at least as high priority, so $(y_j^l,d_j^l)\in M$ for each $j\in [3n]$, $l\in [3]$. We also know that each $c_j^l$ agent must be matched in $M$, so she is matched to either $x_j^l$ or her element item.

        We claim that for each $j$, either all of $c_j^1,c_j^2,c_j^3$ are matched to items from $A=\{ a_1,..,a_{3n}\}$, or none of them are.

        Suppose that it is not the case. Then, there is a $j$ and an $l$, such that $c_j^l$ is matched to $x_j^l$, but $c_j^{l-1}$ is not matched to $x_j^{l-1}$. This implies $x_j^{l-1}$ must be matched to an agent from $Z$ in $M$, and therefore $(c_j^l,x_j^{l-1}) $ blocks $M$, contradiction.

        Also, observe that each $a_i$ must be matched with a $c_j^l$ agent in $M$, since otherwise they would block with $c^1(a_i)$.

        Therefore, if we take those $C_j$ sets, for which $c_j^1,c_j^2,c_j^3$ are matched to $a_i$ items, they must form an exact 3-cover.
    \end{proof}

    Now, we move on to the other direction.

     \begin{proposition}
    If there exists an exact 3-cover in $I$, then $p$ is ex-post stable.
    \end{proposition}
    \vspace{-12pt} \begin{proof}
        We prove that $p=\frac{1}{9n}(\sum_{k=1}^{3n}M_1^k+\sum_{k=1}^{3n}M_2^k+\sum_{k=1}^{3n}M_3^k)$, where each $M_i^k$ is weakly stable.
        For the sake of simplicity, suppose the exact cover of $I$ is $C_1,..,C_n$. (by the symmetry of the construction and the fact that each $a_i$ is in exactly 3 sets, we can suppose this by reindexing the sets). Then, for each $a_i$, $c^1(a_i)\in \{ C_1,..,C_n\}$ and $c^2(a_i),c^3(a_i)\notin \{ C_1,..,C_n\}$.

        Now we define the $9n$ matchings that will be the support of $p$. These are illustrated in Figures \ref{fig:cover_all} and \ref{fig:noncover_all}.

        For each $k$,
        let edges of $M_1^k$ be $(c^1(a_i),a_i), (d_j^l,y_j^l)$ for $j\le n$, $i\in [3n]$, $l\in [3]$ and $(c_j^l,x_j^l), (d_j^l,y_j^l)$ for $j>n$. Furthermore $(s_1,o_1)$ and $(s_2,o_2)$ are also matched in $M_1^k$. Then, let $x_t$ be the $t$-th $x_j^l$ agent who has not obtained a partner yet, $t\in [3n]$ (so $x_1 = x_1^1, x_2 = x_1^2,\dots x_{3n} = x_n^3$). Then, we match $x_t$ to $z_{t+k}$ in $M_1^k$, where $t+k$ is taken modulo $3n$.

        Now, we observe that removing $C_1,..,C_n$, the remaining sets will satisfy that each $a_i$ is included in exactly 2 of them, since $\{ C_1,..,C_n\} $ is an exact 3-cover.

        For each $k$,
        let the edges of $M_2^k$ be $(c_j^l,x_j^l), (d_j^l,y_j^l)$ for $j\le n$ and $(c^2(a_i),a_i)$, $i\in [3n]$. The $c_j^l$ agents that are not matched yet are matched to the corresponding $y_j^l$. The $d_j^l$ agents are matched to $y_j^l$, if that item is not matched to $c_j^l$ agents and to $x_j^l$ otherwise. Then, we match $(s_1,o_2),(s_2,o_1)$ in $M_2^k.$ Finally, let $x_t$ be the $t$-th $x_j^l$ agent who has not obtained a partner yet, $t\in [3n]$. Then, we match $x_t$ to $z_{t+k}$ in $M_2^k$, where $t+k$ is taken modulo $3n$.

        For each $k$,
        let the edges of $M_3^k$ be $(c_j^l,y_j^l), (d_j^l,x_j^l)$ for $j\le n$ and $(c^3(a_i),a_i)$ $i\in [3n]$. The $c_j^l$ agents that are not matched yet are matched to the corresponding $y_j^l$. The $d_j^l$ agents are matched to $y_j^l$, if that item is not matched to $c_j^l$ agents and to $x_j^l$ otherwise. Then, we match $(s_1,o_2),(s_2,o_1)$ in $M_3^k.$ Finally, let $x_t$ be the $t$-th $x_j^l$ agent who has not obtained a partner yet, $t\in [3n]$. Then, we match $x_t$ to $z_{t+k}$ in $M_3^k$, where $t+k$ is taken modulo $3n$.

        It is easy to see that the edges with weight $\frac{1}{3}$ are included in exactly $3n$ matchings, the ones with weight $\frac{2}{3}$ are included in exactly $6n$ matchings, the edges with weight $\frac{1}{9n}$ are included in exactly one matching and all other edges are not included in any matching from $\{ M_1^k,M_2^k,M_3^k | \; k\in [3n] \}$. Therefore, $p=\frac{1}{9n}(\sum_{k=1}^{3n}M_1^k+\sum_{k=1}^{3n}M_2^k+\sum_{k=1}^{3n}M_3^k)$.

        Let $k$ be an arbitrary index from $\{ 1,..,3n\}$.
        It only remains to show that $M_1^k,M_2^k$ and $M_3^k$ are weakly stable matchings. Let us start with $M_1^k$.

        Each $a_i$ and $y_j^l$ item and also $o_1$ and $o_2$ are matched to one of their best agents in $M_1^k$, so they cannot participate in any blocking. For an item $x_j^l$, either it is matched to one of its top agents, or it is matched to someone from $Z$. However, even if it is matched with a collector agent from $Z$, all higher priority agents for it ($c_j^l$ and $c_j^{l+1}$) are matched to better items ($a_i$-s), so there is no blocking with $x_j^l$ items either. Therefore, $M_1^k$ is weakly stable.

        The cases of $M_2^k$ and $M_3^k$ are similar, we only show stability of $M_2^k$. Again, each $a_i$ item as well as $o_1$ and $o_2$ are matched to one of their highest priority options, so they cannot be part of a blocking pair. Each $y_j^l$ agent is matched to either $c_j^l$ or $d_j^l$. There could only be a potential block, if $y_j^l$ is matched to $c_j^l$. However, since $s_1$ is matched to $o_2$, it cannot block with $s_1$, and since each $d_j^l$ is matched to an at least as good item, it cannot block with $d_j^l$ either. The $x_j^l$ items also cannot block with anyone, since if they are not with one of their first choices (which are the only strictly higher priority ones than the agents of $Z\cup \{ d_j^l\}$), then each of their top agents ($c_j^l$ and $c_j^{l+1}$) are matched with someone strictly better (an $a_i$ or $y_j^l$).

        This shows that $p$ is indeed ex-post stable.
    \end{proof}

    This completes the proof of the theorem.
\end{proof}

Now we show that the problem remains hard even if both sides have dichotomous preferences.
\begin{theorem}\label{th:bothdich}
Deciding whether a random matching $p$ is ex-post stable is NP-complete, even if the preferences / priorites of both sides are dichotomous.
\end{theorem}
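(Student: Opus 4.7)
NP membership follows from Caratheodory's theorem exactly as in Theorem~\ref{ex-post-verif:strict-dich}: any ex-post stable $p$ admits a decomposition into at most $n^{2}+1$ weakly stable matchings, which serves as a polynomial certificate. My focus is therefore on NP-hardness.

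The plan is to adapt the X3C reduction of Theorem~\ref{ex-post-verif:strict-dich} by converting its strict agent preferences into dichotomous ones while leaving the item priorities (which are already dichotomous there) essentially unchanged. The main obstacle is that the previous proof crucially exploited the strict chains $c_j^l\colon a_{j_l}\succ y_j^l\succ x_j^{l-1}\succ x_j^l\succ\textit{others}$ and $s_1\colon o_2\succ Y\succ o_1\succ\textit{others}$ to set up the blocking-pair cascade that forces the cyclic all-or-nothing behaviour inside each set gadget. A dichotomous preference offers only a single top class, so this cascade cannot be carried by a single agent and must instead be spread across several new agents and auxiliary items.

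My intended construction is to replace each $c_j^l$ by a small cluster of dichotomous copies $c_j^{l,(1)},\dots,c_j^{l,(4)}$, where each copy has exactly one item in its top tier, corresponding to one position of the original strict list ($a_{j_l}$, $y_j^l$, $x_j^{l-1}$, and $x_j^l$ respectively). A ``padding'' item is attached to each copy so that the copies whose top-tier item is taken by another agent can still be matched without creating new blocking pairs. Similarly, $s_1$ is split into a cluster capturing the three-level structure $o_2, Y, o_1$, and $d_j^l$, $s_2$ are split into two-copy clusters. The values of the probability matrix on the new edges are derived from the old $p$ so that the total probability mass sent from a cluster to each outside item matches the original mass, while the probabilities on padding edges absorb the remaining mass inside the cluster.

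The main technical hurdle will be proving that this local transformation preserves both directions of the correctness proof. The ``exact cover implies ex-post stability'' direction is essentially a lift: each of the $9n$ matchings used before induces a matching in the new instance by routing one copy of each cluster to its top-tier partner and the other copies to their padding items, and weak stability has to be re-checked under the new dichotomous priorities. The delicate reverse direction requires showing that with carefully chosen padding items the only way a cluster can avoid producing a blocking pair is to occupy exactly the configurations that the strict preferences of the original $c_j^l$ (respectively $s_1$, $d_j^l$, $s_2$) would have forced. Once this forcing is established, the step ``$(s_1,o_1)\in M$ implies every $y_j^l$ is matched to $d_j^l$'' and the cyclic blocking argument inside each set gadget go through as in the strict proof, yielding an exact 3-cover.
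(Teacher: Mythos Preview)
Your proposal is not a proof but a plan, and the plan is built on a misdiagnosis of the obstacle. You say the previous reduction ``crucially exploited the strict chains'' on the $c_j^l$ and $s_1$ agents and therefore propose to simulate these chains by splitting each such agent into a cluster of single-item-top copies with padding items. The paper shows that no such machinery is needed: the construction, the random matching $p$, and all the matchings $M_i^k$ from Theorem~\ref{ex-post-verif:strict-dich} are kept \emph{unchanged}; only the agent preferences are collapsed, by setting
\[
c_j^l:\ [a_{j_l},\,y_j^l,\,x_j^{l-1}],\,[\text{others}],\qquad
s_1:\ [o_2,\,Y],\,[\text{others}],
\]
and making $d_j^l$, $s_2$, and the $z_j$ totally indifferent. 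The point you missed is that the forward direction (``ex-post stable $\Rightarrow$ exact cover'') never uses the full strict chain: it only needs the single strict comparisons $x_j^{l-1}\succ_{c_j^l} x_j^l$ and $y_j^l\succ_{s_1} o_1$, and these are preserved by the two-tier collapse above because $x_j^l$ (respectively $o_1$) lands in the bottom tier. Consequently the entire ``$(s_1,o_1)\in M$ forces every $(d_j^l,y_j^l)\in M$'' cascade and the cyclic all-or-nothing argument go through verbatim.

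Beyond missing this simplification, your plan has a genuine gap: you have not specified the padding items, their priorities, or the probability mass on the new edges, and you explicitly defer the verification that the clusters force exactly the old configurations. That verification is the whole content of the reduction; without it there is no proof. In particular, introducing padding items with their own priorities can easily create new blocking pairs (or destroy old ones) in the $M_i^k$ matchings, and there is no indication in your sketch of how this is controlled. The paper's route avoids all of this by not introducing any new agents or items at all.
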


\vspace{-12pt}
\begin{proof}

    We reduce from \textsc{x3c}. The construction, the random matching $p$ and the matchings $M_i^k$, $i\in [3]$, $k\in [3n]$ are identical as in the proof of Theorem \ref{ex-post-verif:strict-dich}, only the preferences are modified.
    Let the new preferences be the following. For the agents:
    \begin{center}
    \begin{tabular}{rl}
      $c_j^l:$ & $[a_{j_l}, y_j^l, x_j^{l-1}], [ others]$ \\
      $d_j^l:$ & $[every \; item]$, \\
      $s_1: $ & $[o_2, Y], [others]$\\
      $s_2:$ & $[every \; item]$\\
      $z_j:$ & $[every \; item]$
      \end{tabular}
    \end{center}
      where $j\in [3n]$, $l\in [3]$ taken ($mod \; 3)$ and $Y=\{ y_j^l | \; j\in [3n], \; l\in [3]\} $.

    For the items the priority orders are the same as in Theorem \ref{ex-post-verif:strict-dich}.

    \begin{proposition}
    If $p$ is ex-post stable, then there exists an exact 3-cover.
    \end{proposition}
    \vspace{-12pt}
    \begin{proof}
    The proof is exactly the same as it was in Theorem \ref{ex-post-verif:strict-dich}.
    \end{proof}

    Now, we move on to the other direction.

    \begin{proposition}
    If there exists an exact 3-cover in $I$, then $p$ is ex-post stable.
    \end{proposition}
    \vspace{-12pt}
    \begin{proof}
        Again, we prove that $p=\frac{1}{9n}(\sum_{k=1}^{3n}M_1^k+\sum_{k=1}^{3n}M_2^k+\sum_{k=1}^{3n}M_3^k)$, where each $M_i^k$ is weakly stable and is defined the same.

        Let $k$ be an arbitrary index from $\{ 1,..,3n\}$.
        It only remains to show that $M_1^k,M_2^k$ and $M_3^k$ are weakly stable matchings. Notice, that with the new preferences, in any matching, only agents from $\{ c_j^l \mid j\in [3n],l\in [3]\} \cup \{ s_1\}$ could block, since others are totally indifferent.

        Let us start with $M_1^k$. From the construction, it follows that $s_1$ does not block with any agent from $Y$, since all of them are matched to a same priority agent. A $c_j^l$ agent could only block, if it is assigned to $x_j^l$. Hence, $x_j^{l-1},y_j^l$ and $a_{j_l}$ are all assigned to at least as good agents, so no pair can block.

        In the case of $M_2^k$, $s_1$ is with $o_2$, so it is not part of any blocking pair. A $c_j^l$ agent could again only block, if it is with $x_j^l$. However, that can only happen with those set agent that correspond to the exact cover. Hence, all of their better items are with agents that have at least as high priorities.

        In $M_3^k$, each agent is with a top choice, so it is obviously weakly stable.

        This shows that $p$ is indeed ex-post stable.
    \end{proof}

    This completes the proof of the theorem.
\end{proof}

We also obtain the following Theorem that may be of independent interest.

\begin{theorem}
    Deciding whether there is a weakly stable matching that is consistent with a random matching $p$ is NP-complete.
\end{theorem}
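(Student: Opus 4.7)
The plan is to establish NP-membership in the standard way and then derive NP-hardness by adapting the X3C reduction from the proof of Theorem \ref{ex-post-verif:strict-dich}.

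For NP-membership, a candidate weakly stable matching $M$ itself serves as a polynomial-size certificate: given $M$, one verifies in polynomial time that $M$ is a perfect matching, that $M(i,o)=1$ implies $p(i,o)>0$ (consistency), and that the weak-stability inequalities \eqref{LE4W} hold for every pair.

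For NP-hardness, I would reuse the gadget construction of Theorem \ref{ex-post-verif:strict-dich} essentially verbatim, modifying only the trigger block: set $p(s_1,o_1)=p(s_2,o_2)=1$ and $p(s_1,o_2)=p(s_2,o_1)=0$, leaving all other entries of $p$ unchanged. This keeps $p$ bistochastic, and forces $(s_1,o_1),(s_2,o_2)\in M$ in any deterministic matching $M$ consistent with the modified $p$. With $(s_1,o_1)\in M$ in hand, the forward argument of Theorem \ref{ex-post-verif:strict-dich} transfers almost verbatim: weak stability forbids any $s_1$-$y_j^l$ block, so $(d_j^l,y_j^l)\in M$ for all $j,l$; the $c$-gadget cascade then forces, for each $j$, that either all three of $c_j^1,c_j^2,c_j^3$ are matched to $a$-items or none are; and because every $a_i$ must be matched to some $c^k(a_i)$, the ``all three'' indices form an exact $3$-cover. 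Conversely, the matching $M_1^1$ constructed in the proof of Theorem \ref{ex-post-verif:strict-dich} is weakly stable, contains $(s_1,o_1)$ and $(s_2,o_2)$, and uses no edge whose $p$-value we zeroed out, so it is a consistent weakly stable matching whenever an exact $3$-cover exists.

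The main obstacle, and the reason for the modification above, is a branch that did not arise in the proof of Theorem \ref{ex-post-verif:strict-dich}: if one merely asked for \emph{some} weakly stable matching $M$ consistent with the original (unmodified) $p$, then $M$ could alternatively contain $(s_1,o_2)$ and $(s_2,o_1)$, in which case $s_1$'s blocking threat against the $y$-items evaporates and the cascade forcing $(d_j^l,y_j^l)\in M$ collapses, so no exact cover would be implied. Zeroing out $p(s_1,o_2)$ and $p(s_2,o_1)$ is the minimal adjustment that eliminates this branch at the source while preserving bistochasticity. One last routine check confirms that the adjustment does not spoil the backward direction by introducing new blocks in $M_1^1$: since $o_2$'s priority is a single indifference class in the construction, $(s_1,o_2)$ cannot strictly block past $s_2$, and analogously for $(s_2,o_1)$. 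This completes the reduction.
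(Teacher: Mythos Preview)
Your proposal is correct and follows essentially the same approach as the paper: the paper also modifies only the $s_1,s_2,o_1,o_2$ block by setting $p(s_1,o_1)=p(s_2,o_2)=1$ and $p(s_1,o_2)=p(s_2,o_1)=0$, then invokes the forward argument of Theorem~\ref{ex-post-verif:strict-dich} and uses $M_1^1$ for the backward direction. Your write-up is in fact slightly more careful than the paper's, since you explicitly justify why the modification is needed and verify that no new blocking pair involving $o_1$ or $o_2$ arises in $M_1^1$.
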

\vspace{-12pt} \begin{proof}
    We use the same construction from Theorem \ref{ex-post-verif:strict-dich}, with the only difference, that instead of $p(s_1,o_1)=p(s_2,o_2)=\frac{1}{3}$, $p(s_2,o_1)=p(s_1,o_2)=\frac{2}{3}$ we have $p(s_1,o_1)=p(s_2,o_2)=1$, $p(s_2,o_1)=p(s_1,o_2)=0$. Then, any consistent matching with $p$ must contain $(s_1,o_1)$, so by the same argument, there exists an exact 3-cover.

    And if there is an exact 3-cover, then the matching $M_1^1$ will be a weakly stable matching consistent with $p$.
\end{proof}

\subsection{Bounded length incomplete preferences}

In this section we show that deciding ex-post stability remains NP-hard even with incomplete preferences of length at most 3.
From both a practical and a complexity theoretical point of view, it is an important question whether short, incomplete preferences make the problem simpler. Especially, since in most applications like resident matching or school choice, at least one of the sides tend to have very short preference lists.

In case of incomplete preferences, we make the assumption that some objects are unacceptable for an agent, which is expressed by its absence from the agents preference list. Sadly, short lists does not help in the complexity of ex-post stability. 

\begin{figure}
        \centering
        \includegraphics[height=0.25\textheight]{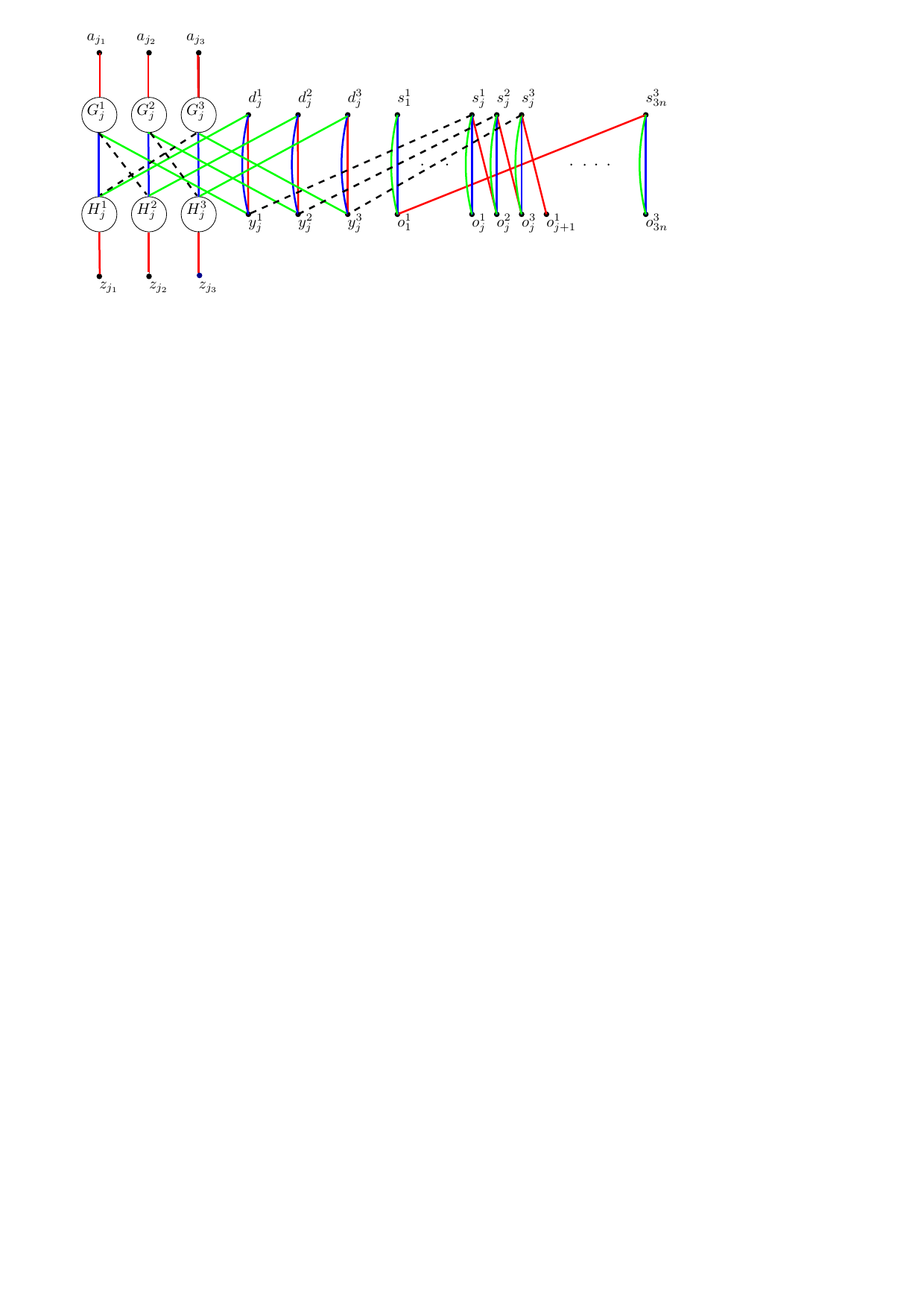}
        \caption{The construction for Theorem \ref{thm:deg3}. The red, blue and green edges are the edges of $M_1,M_2$ and $M_3$ respectively, when the set $C_j$ is part of the exact 3-cover. The $p$ value on each edge is $\frac{1}{3}$ times the number of colors the edge has. The dotted lines represent the edges with $p$ value 0.}
        \label{fig:deg3-cover}
    \end{figure}
    \begin{figure}
        \centering
        \includegraphics[height=0.25\textheight]{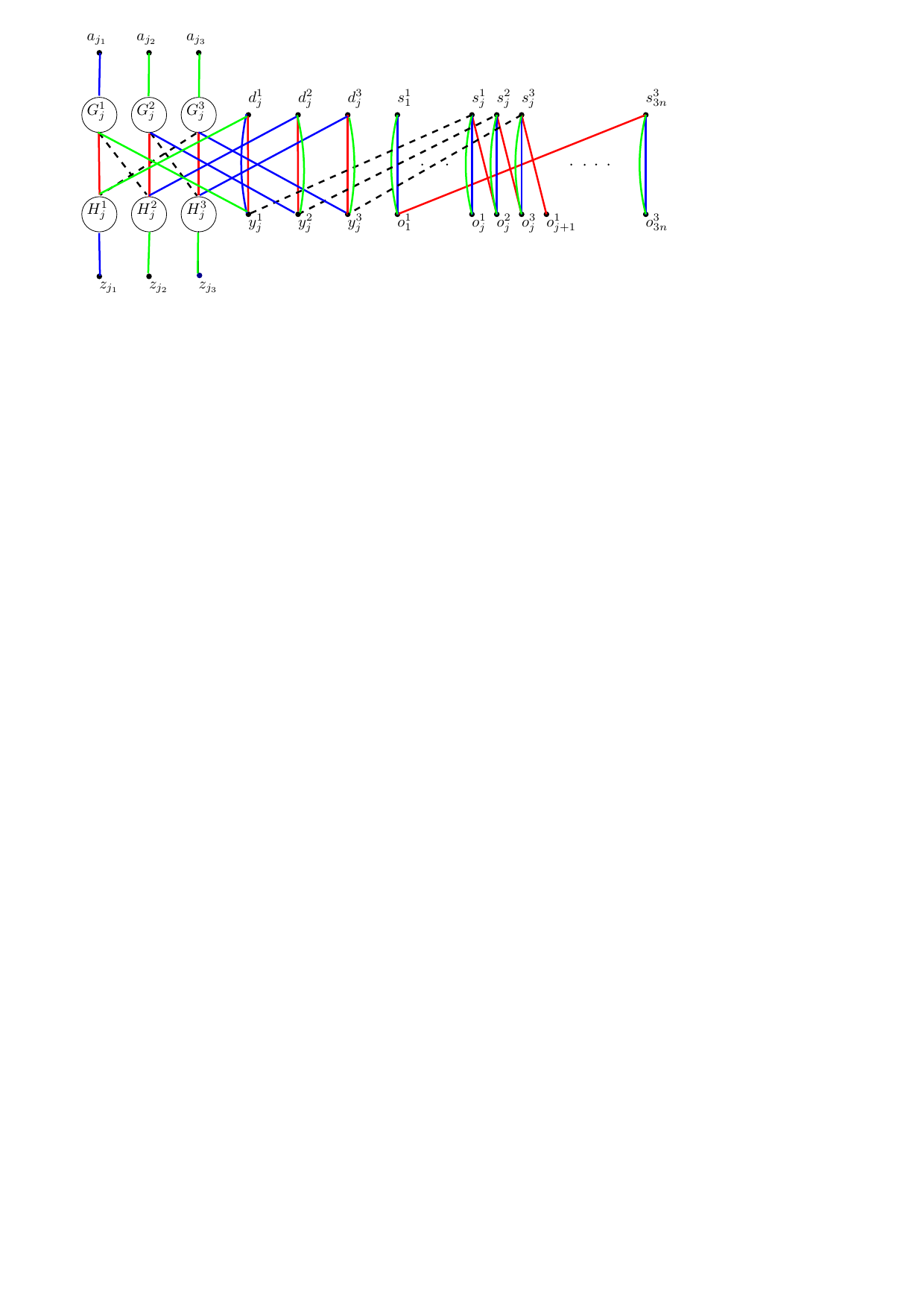}
        \caption{The construction for Theorem \ref{thm:deg3}. The red, blue and green edges are the edges of $M_1,M_2$ and $M_3$ respectively, when the set $C_j$ is not part of the exact 3-cover. The $p$ value on each edge is $\frac{1}{3}$ times the number of colors the edge has. The dotted lines represent the edges with $p$ value 0.}
        \label{fig:deg3-noncover}
    \end{figure}
    \begin{figure}
        \centering
        \includegraphics[height=0.25\textheight]{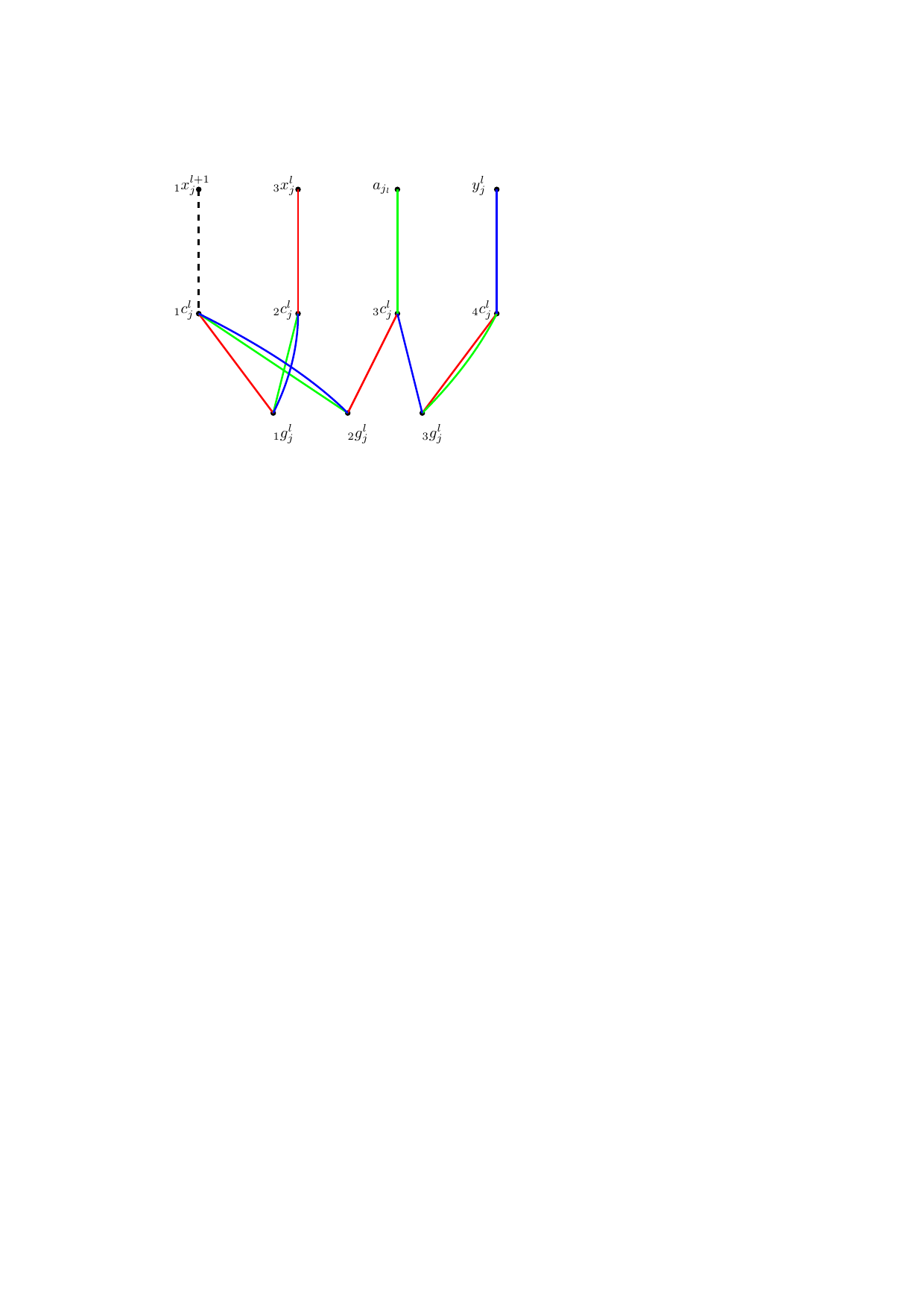}
        \caption{The gadget $G_j^l$ with its neighbors $_1x_j^{l+1},_3x_j^l,a_{j_l}$ and $y_j^l$. The red, blue and green edges correspond to the matchings $\{ M_1,M_2,M_3\}$, depending on which copy of $c_j^l$ is matched to the outside. The $p$ value on each edge is $\frac{1}{3}$ times the number of colors the edge has. }
        \label{fig:G-gadget}
    \end{figure}
    \begin{figure}
        \centering
        \includegraphics[height=0.25\textheight]{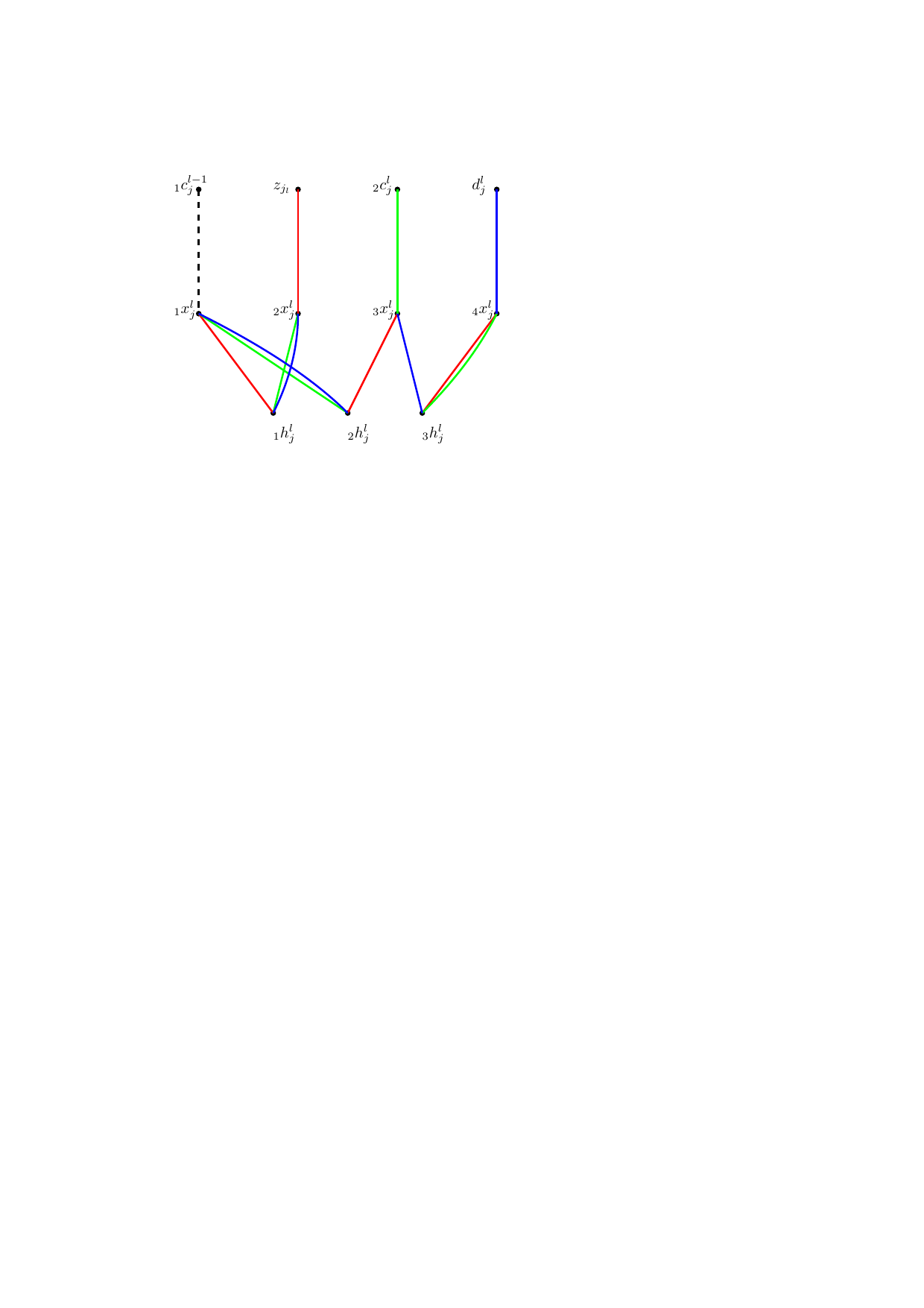}
        \caption{The gadget $H_j^l$ with its neighbors $_1c_j^{l-1},z_{j_l},_2c_j^l$ and $d_j^l$. The red, blue and green edges correspond to the matchings $\{M_1,M_2,M_3\}$, depending on which copy of $x_j^l$ is matched to the outside. The $p$ value on each edge is $\frac{1}{3}$ times the number of colors the edge has. The dotted line has $p$ value 0, and may be a blocking edge, if the red matching is taken in both $G_j^l$ and $H_j^{l+1}$.}
        \label{fig:H-gadget}
    \end{figure}

\begin{theorem}
\label{thm:deg3}
Deciding whether a random matching $p$ is ex-post stable is NP-complete, even if all preference and priority lists have length at most 3 and they are dichotomous.
\end{theorem}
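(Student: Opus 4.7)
The plan is to adapt the construction from Theorem~\ref{ex-post-verif:strict-dich} (which is already dichotomous on one side, as in Theorem~\ref{th:bothdich}) and replace every agent or item with a long preference list by a small gadget of copies, each of length at most three. The reduction is again from \textsc{X3C} restricted to instances in which each element appears in exactly three sets.

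The construction proceeds as follows. For each element $a_i$ we keep an item $a_i$. For each set $C_j=\{a_{j_1},a_{j_2},a_{j_3}\}$ and each position $l\in[3]$ we introduce the set-gadget $G_j^l$ (see Figure~\ref{fig:G-gadget}), which contains three copies $_1c_j^l,\,_2c_j^l,\,_3c_j^l$ of the original agent $c_j^l$, together with auxiliary items and agents internal to the gadget. Each copy $_rc_j^l$ has a preference list of length at most $3$: in $M_r$ it will be the copy that leaves the gadget and reaches one of the four external neighbors $a_{j_l}$, $y_j^l$, $_3x_j^l$, $_1x_j^{l+1}$, while the other two copies are absorbed inside the gadget. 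We similarly replace each original $x_j^l$ with the item-gadget $H_j^l$ (Figure~\ref{fig:H-gadget}) containing three copies $_1x_j^l,\,_2x_j^l,\,_3x_j^l$, whose external neighbors are $_1c_j^{l-1},\,z_{j_l},\,_2c_j^l,\,d_j^l$. The items $y_j^l$, agents $d_j^l$, the $z$-collectors, and the pair $s_1,s_2,o_1,o_2$ are retained; whenever a preference/priority list would still be too long, we duplicate that vertex into copies each seeing at most three neighbors. The fractional values on internal edges of $G_j^l$ and $H_j^l$ are the values read off from Figures~\ref{fig:G-gadget} and \ref{fig:H-gadget}, and on external edges they are the analogues of the $1/3$, $2/3$, $1/(9n)$ values from the original construction.

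For the easy direction, assuming an exact 3-cover (WLOG $C_1,\dots,C_n$), I would exhibit $p$ as $\tfrac{1}{9n}\sum_{k=1}^{3n}(M_1^k+M_2^k+M_3^k)$ with $M_1^k,M_2^k,M_3^k$ defined exactly as in the proof of Theorem~\ref{ex-post-verif:strict-dich}, lifted to the gadgets in the natural way: in $M_r^k$ the copy $_rc_j^l$ leaves its $G$-gadget while $_1c_j^l,_2c_j^l,_3c_j^l$ together cover the remaining internal items, and symmetrically for $H_j^l$; the external $z$-rotation by $k$ is inherited. Weak stability is checked gadget-by-gadget using the short lists and the same bookkeeping as in Theorem~\ref{ex-post-verif:strict-dich}, noting again that in the $M_1^k$ matchings the set gadgets of covered sets route their copies to the $a_i$ items, while non-covered ones route them to the $x$-copies. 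For the hard direction I would again use the edge $(s_1,o_1)$, which has positive probability and must therefore occur in some support matching $M$, to force that every $y_j^l$ is matched to $d_j^l$ in $M$; propagating this through the gadget arithmetic of $G_j^l$ and $H_j^l$ yields that for every $j$, either every copy $_rc_j^l$ points outward to $a_{j_l}$ (covered case) or none does (uncovered case), and then every $a_i$ is matched by one of the $c^\cdot(a_i)$ copies, producing an exact cover.

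The main obstacle is the design of the two gadgets $G_j^l$ and $H_j^l$: they must simultaneously have every list of length at most three and be dichotomous, realize a $p$-restriction that decomposes essentially uniquely into three ``colors,'' and interact with their external neighbors in such a way that the global forcing argument still works. Concretely, inside each $G_j^l$ we need the $p$-values on the three internal alternation patterns (red/blue/green in Figure~\ref{fig:G-gadget}) to be the only consistent decomposition, so that the identity of the copy that exits the gadget is determined by the index of the global matching $M_r$; the same is required of $H_j^l$. Once these two local lemmas are established, the rest of the argument is essentially identical to the proofs of Theorems~\ref{ex-post-verif:strict-dich} and \ref{th:bothdich}, and NP membership follows as before by Carathéodory's theorem.
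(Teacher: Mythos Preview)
Your high-level plan---replace long-list vertices by constant-degree gadgets and rerun the argument of Theorem~\ref{ex-post-verif:strict-dich}---is the same as the paper's, but several of the pieces you leave unspecified are exactly where the real work lies, and at least one of them cannot be handled the way you suggest.

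The most serious gap is the treatment of $s_1$. In the original construction $s_1$'s list contains all of $Y$ (size $9n$), and the forcing argument needs that when $s_1$ receives $o_1$, \emph{every} $y_j^l$ is simultaneously matched to a top-priority agent. Simply ``duplicating $s_1$ into copies each seeing at most three neighbors'' does not give this: you would need all copies to receive their bad item in the \emph{same} support matching, and nothing in a naive duplication forces that. The paper solves this with a genuinely new gadget: it replaces $(s_1,s_2,o_1,o_2)$ by a \emph{cycle} of $9n$ agents $s_j^l$ and $9n$ items $o_j^l$, where $s_j^l$'s list is $[o_j^l,y_j^l],\,o_j^{l+1}$ (length~3). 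Since $p(s_j^l,y_j^l)=0$, each $s_j^l$ gets either $o_j^l$ or $o_j^{l+1}$ in any support matching, and by the cycle structure these choices are perfectly correlated: if one $s_j^l$ gets its worst item, all of them do, so every $(s_j^l,y_j^l)$ is a potential blocking pair at once. This cycle trick is the key idea you are missing.

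A second, smaller discrepancy: the paper also shortens the collector lists by letting $z_i$ see only the three $_2x_j^l$ corresponding to sets containing $a_i$. As a consequence all $p$-values become multiples of $\tfrac{1}{3}$ and the decomposition uses just three matchings $p=\tfrac{1}{3}(M_1+M_2+M_3)$, not the $9n$ matchings with a $z$-rotation that you inherit from Theorem~\ref{ex-post-verif:strict-dich}. Finally, the actual gadgets $G_j^l$ and $H_j^l$ have \emph{four} agents (resp.\ items) and three items (resp.\ agents), not three copies as you describe; the extra vertex is what lets each gadget have four external neighbours while keeping every list to length~$\le 3$. You correctly identify the gadget design as ``the main obstacle,'' but you do not supply it, and the $s$-cycle issue above shows that the remaining argument is not ``essentially identical'' once the gadgets are in place.
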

 \vspace{-12pt} 

\noindent \textbf{Intuitive Overview.}
To help the interested reader in understanding, we first briefly outline the main idea behind the reduction from \textsc{X3C}.
The proof relies on enforcing a partially unique decomposition of the random matching $p$ into three deterministic weakly stable matchings, such that $p = \frac{1}{3}(M_1 + M_2 + M_3)$.
The matching $M_1$ acts as the \emph{selector} for the exact cover: if a set $C_j$ is part of the solution, its agents are ``active'' in $M_1$ and cover their respective elements; otherwise, they remain ``idle.'' $M_2$ and $M_3$ serve as complementary matchings to satisfy the probability constraints.

Since preferences have very short bounded length, we introduce and utilize two specific gadgets:
\begin{itemize}
    \item \textbf{Gadgets ($G_j^l$)} act as switches between a set $C_j$ and its $l$-th element. They can switch between a ``covering'' state (matching the element in $M_1$) and two similar ``idle'' states.
    \item \textbf{Gadgets ($H_j^l$)} are placed adjacent the $G_j^l$ gadgets of the same set $C_j$. They enforce that the covering state of $G_j^l$ implies the covering state of $G_j^{l+1}$ ($3+1:=1$). This ensures that a set $C_j$ is either entirely selected (covering all three elements) or entirely unselected in $M_1$.
\end{itemize}
Consequently, $p$ is ex-post stable if and only if there exists a valid Exact 3-Cover.

\begin{proof}
        Membership in NP follows from the same argument as before.
    ﻿
        To show NP-hardness, we reduce from \textsc{X3C}.  Let $I$ be an instance of \textsc{X3C}. Let the sets be $C_1,..,C_{3n}$ and the elements be $a_1,..,a_{3n}$
     We construct an instance $I'$ for our problem as follows.
    ﻿
    For each element $a_i$ we add an item $a_i$. For each set $C_j$, we add 3 items  $y_j^1,y_j^2,y_j^3$, 3 agents $d_j^1,d_j^2,d_j^3$ and 6 gadgets $G_j^1,G_j^2,G_j^3,H_j^1,H_j^2,H_j^3$.
    ﻿
    The $G_j^l$ gadgets are illustrated in Figure \ref{fig:G-gadget}. They consists of 4 agents $_1c_j^l,_2c_j^l,_3c_j^l,_4c_j^l$ and 3 items $_1g_j^l,_2g_j^l,_3g_j^l$.
    ﻿
    The $H_j^l$ gadgets are illustrated in Figure \ref{fig:H-gadget}. They consist of 4 items $_1x_j^l,_2x_j^l,_3x_j^l,_4x_j^l$ and 3 agents $_1h_j^l,_2h_j^l,_3h_j^l$.
    ﻿
    ﻿
    Then, we add $3n$ collector agents $z_1,\dots,z_{3n}$, one for each element.  Finally, we add $9n$ items $o_j^l$ for $j\in [3n],l\in [3]$ and $9n$ more agents $s_j^l$ for $j\in [3n],l\in [3]$.
    Let $C_j=\{ a_{j_1},a_{j_2},a_{j_3} \}$, $j_1<j_2<j_3$ be the $j$-th set in $I$. We refer to $a_{j_1}$ as the first element in $C_j$, $a_{j_2}$ as the second and $a_{j_3}$ as the third.
    ﻿
        Let the preferences (left) and priority lists (right) be the following.
        \begin{center}
        \begin{tabular}{rl|rl}
          $_1c_j^l:$ & $[_2g_j^l,_1x_j^{l+1}],_1g_j^l$  & $_1x_j^l:$ & $[_2h_j^l,_1c_j^{l-1}],_1h_j^l$ \\
          $_2c_j^l:$ & $[_3x_j^l,_1g_j^l]$ & $_2x_j^l:$  & $[_1h_j^l,z_{j_l}]$ \\
          $_3c_j^l:$ & $[a_{j_l},_2g_j^l,_3g_j^l]$ & $_3x_j^l:$ & $[_2h_j^l,_3h_j^l,_2c_j^l]$\\
          $_4c_j^l:$ & $[y_j^l,_3g_j^l]$ & $_4x_j^l:$ & $[d_j^l,_3h_j^l]$\\
          $_1h_j^l:$ & $[_1x_j^l,_2x_j^l]$ & $_1g_j^l:$ & $[_1c_j^l,_2c_j^l]$\\
          $_2h_j^l:$ & $[_1x_j^l,_3x_j^l]$ & $_2g_j^l:$ & $[_1c_j^l,_3c_j^l]$\\
          $_3h_j^l:$ & $[_3x_j^l,_4x_j^l]$ & $_3g_j^l:$ & $[_3c_j^l,_4c_j^l]$ \\
          $d_j^l:$ & $[_4x_j^l, y_j^l]$ & $y_j^l:$ & $[d_j^l,s_j^l],_4c_j^l$ \\
          $s_j^1: $ & $[o_j^1,y_j^1],o_j^2$ & $o_j^1:$ & $[s_j^1,s_{j-1}^3]$ \\
          $s_j^2: $ & $[o_j^2,y_j^2],o_j^3$ & $o_j^2:$ & $[s_j^1,s_j^2]$\\
          $s_j^3: $ & $[o_j^3,y_j^3],o_{j+1}^1$ & $o_j^3:$ & $[s_j^2,s_j^3]$\\
          $z_i:$ & $[_2X(a_i)]$ & $a_i:$ & $[_3C(a_i)]$ \\
          \end{tabular}
        \end{center}
          where $j\in [3n]$, $l \in [3]$ taken ($mod \; 3)$ and $Y=\{ y_j^l | \; j \in [3n], \; l\in [3] \} $, $X=\{ x_j^l | \; j\in [3n], \; l\in [3] \} $ and $[S]$ for a set $S$ indicates that the elements of $S$ are tied. Also, $_2X(a_i)$ indicates the three $_2x_j^l$ objects for those $j,l$ pairs, such that $a_i$ is the $l$-th element of set $C_j$, while $_3C(a_i)$ indicates the three $_3c_j^l$ objects for those $j,l$ pairs, such that $a_i$ is the $l$-th element of set $C_j$.
    ﻿
          Observe that each list is dichotomous and each has at most three entries.
    ﻿
        For an item $a_i$ let $_3C(a_i)=\{ _3c_{j_{i1}}^{l_{i1}},_3c_{j_{i2}}^{l_{i2}},_3c_{j_{i3}}^{l_{i3}}\}$ be its three adjacent agents.
        Let the random matching $p$ be defined as:
        \begin{enumerate}
            \item $p(_3c_{j_{i1}}^{l_{i1}},a_i)=p(_3c_{j_{i2}}^{l_{i2}},a_i)=p(_3c_{j_{i3}}^{l_{i3}},a_i)=\frac{1}{3}$ for $i\in [3n]$.
            \item $p(_1c_j^l,_1g_j^l)=p(_3c_j^l,_2g_j^l)=p(_3c_j^l,_3g_j^l)=\frac{1}{3}$, $j\in [3n]$, $l\in [3]$
            \item $p(_1c_j^l,_2g_j^l)=p(_2c_j^l,_1g_j^l)=p(_4c_j^l,_3g_j^l)=\frac{2}{3}$, $j\in [3n]$, $l\in [3]$
            \item $p(_2c_j^l,_3x_j^l)=p(_4c_j^l,y_j^l)=\frac{1}{3}$, for $j\in [3n],l\in [3]$
            \item $p(_1h_j^l,_1x_j^l)=p(_2h_j^l,_3x_j^l)=p(_3h_j^l,_3x_j^l)=\frac{1}{3}$, $j\in [3n]$, $l\in [3]$
            \item $p(_2h_j^l,_1x_j^l)=p(_1h_j^l,_2x_j^l)=p(_3h_j^l,_4x_j^l)=\frac{2}{3}$, $j\in [3n]$, $l\in [3]$
            \item $p(d_j^l,_4x_j^l)=\frac{1}{3}$, $p(d_j^l,y_j^l)=\frac{2}{3}$, $j\in [3n]$, $l\in [3]$
            \item $p(s_j^l,o_j^l)=\frac{2}{3}$, $p(s_j^l,o_j^{l+1})=\frac{1}{3}$ for $j\in [3n],l\in [2]$ and $p(s_j^3,o_j^3)=\frac{2}{3}$, $p(s_j^3,o_{j+1}^1)=\frac{1}{3}$ for $j\in [3n]$.
            \item $p(z_i,_2x_{j_{i1}}^{l_{i1}})=p(z_i,_2x_{j_{i2}}^{l_{i2}})=p(z_i,_2x_{j_{i3}}^{l_{i3}})=\frac{1}{3}$ for each $i\in [3n]$.
        \end{enumerate}
        For any other edge $p$ is $0$.
        This completes the construction of $I'$. The whole construction together with the $p$ values is illustrated in Figures \ref{fig:deg3-cover}, \ref{fig:deg3-noncover}, \ref{fig:G-gadget} and \ref{fig:H-gadget}.
    ﻿
        \begin{proposition}
        If $p$ is ex-post stable, then there exists an exact 3-cover.
        \end{proposition}
        \vspace{-12pt} \begin{proof}
            If $p$ can be written as a convex combination of weakly stable matchings, then, because $p(s_1^1,o_1^2)>0$, there has to be one matching in which the edge $(s_1^1,o_1^2)$ is included. Let this matching be $M$. As each matching must be complete, as the sum of weights adjacent to any vertex is 1, and $s_j^l$ cannot be matched to $y_j^l$ as $p(s_j^l,y_j^l)=0,$  it must hold that each $s_j^l$ receives its worst object simultaneously in $M$.
    ﻿
            $M$ is weakly stable, therefore no $(s_j^l,y_j^l)$ edge blocks it. This can only happen, if each $y_j^l$ item is matched to someone with at least as high priority as $s_j^l$, so $(y_j^l,d_j^l)\in M$ for each $j\in [3n]$, $l\in [3]$. It is also clear that each item $a_i$ must be matched to a $_3c_j^l$ agent in $M$.
    ﻿
             Note that $p(_1c_j^l,_1x_j^{l+1})=0$ and the $y_i^l$ items are all assigned to $d_i^l$. As $M$ is complete and there are 4 agents and only 3 items in $G_j^{l}$, either $(_3c_j^l,a_{j_l})\in M$ or $(_2c_j^l,_3x_j^l)\in M$ for every $j,l$. We claim that for each $j$, either all of $_3c_j^1,_3c_j^2,_3c_j^3$ are matched to items from $A=\{ a_1,..,a_{3n}\}$, or none of them are.
    ﻿
            Suppose that it is not the case. Then, there is a pair $j,l$ such that $(_2c_j^l,_3x_j^l)\in M$, but $(_2c_j^{l+1},_3x_j^{l+1})\notin M$. This means, that from the gadget $H_j^{l+1}$, only $_2x_j^{l+1}$ can be matched out, in particular to $z_{j_{l+1}}$. Therefore, $(z_{j_{l+1}},_2x_j^{l+1})\in M$, so $(_1h_j^{l+1},_1x_j^{l+1})\in M$, as $_1h_j^{l+1}$ must be matched. Similarly, $(_1c_j^l,_1g_j^l)\in M$, as $_1g_j^l$ must be matched too. However, this is a contradiction, because the edge $(_1c_j^l,_1x_j^{l+1})$ would block $M$.
    ﻿
            Therefore, if we take those $C_j$ sets, for which $_3c_j^1,_3c_j^2,_3c_j^3$ are matched to $a_i$ items, they must form an exact 3-cover.
    ﻿
        \end{proof}
    ﻿
        Now, we move on to the other direction.
    ﻿
         \begin{proposition}
        If there exists an exact 3-cover in $I$, then $p$ is ex-post stable.
        \end{proposition}
        \vspace{-12pt} \begin{proof}
            We prove that $p=\frac{1}{3}(M_1+M_2+M_3)$, where each $M_i$ is weakly stable.
            For the sake of simplicity, suppose the exact cover of $I$ is $C_1,..,C_n$. (by the symmetry of the construction and the fact that each $a_i$ is in exactly 3 sets, we can suppose this by reindexing the sets). Then, for each $a_i$, $C_{j_{i1}}\in \{ C_1,..,C_n\}$ and $C_{j_{i2}},C_{j_{i3}}\notin \{ C_1,..,C_n\}$.
    ﻿
            Now we define the $3$ matchings that will be the support of $p$. The edges of the matchings are illustrated in Figures \ref{fig:deg3-cover}-\ref{fig:H-gadget}.
    ﻿
            Let edges of $M_1$ be
            \begin{itemize}
                \item
    ﻿
            $\{ (s_j^1,o_j^2),(s_j^2,o_j^3),(s_j^3,o_{j+1}^1) \mid j\in [3n]\} \cup $
            \item $\{ (d_j^l,y_j^l) \mid j\in [3n],l\in [3]\} \cup$
            \item $\{ (_3c_{j_{i1}}^{l_{i1}},a_i),(_1c_{j_{i1}}^{l_{i1}},_2g_{j_{i1}}^{l_{i1}}),(_2c_{j_{i1}}^{l_{i1}},_1g_{j_{i1}}^{l_{i1}}),(_4c_{j_{i1}}^{l_{i1}},_3g_{j_{i1}}^{l_{i1}})   \mid i\in [3n] \} \cup$
            \item  $\{ (_1c_j^l,_1g_j^l), (_2c_j^l,_3x_j^l),(_3c_j^l,_2g_j^l),(_4c_j^l,_3g_j^l)\mid j\in \{ n+1,\dots,3n\},l\in [3]\} \cup$
            \item $\{ (z_i,_2x_{j_{i1}}^{l_{i1}}) \mid i\in [3n]\} \cup $
            \item $\{ (_1h_j^l,_1x_j^l),(_2h_j^l,_3x_j^l),(_3h_j^l,_4x_j^l) \mid j\in [n],l\in [3]\} \cup $
            \item $\{(_1h_j^l,_2x_j^l),(_2h_j^l,_1x_j^l),(_3h_j^l,_4x_j^l) \mid j\in \{ n+1,\dots, 3n\}, l\in [3]\} $.
    \end{itemize}
            Now, we observe that removing $C_1,..,C_n$, the remaining sets will satisfy that each $a_i$ is included in exactly 2 of them, since $C_1,..,C_n$ is an exact 3-cover.
    ﻿
            Let edges of $M_2$ be
            \begin{itemize}
                \item $\{ (s_j^l,o_j^l) \mid j\in [3n],l\in [3]\} \cup$
                \item $\{ (_3c_{j_{i2}}^{l_{i2}},a_i),(_1c_{j_{i2}}^{l_{i2}},_2g_{j_{i2}}^{l_{i2}}),(_2c_{j_{i2}}^{l_{i2}},_1g_{j_{i2}}^{l_{i2}}),(_4c_{j_{i2}}^{l_{i2}},_3g_{j_{i2}}^{l_{i2}})   \mid i\in [3n] \} \cup$
                \item $\{ (z_i,_2x_{j_{i2}}^{l_{i2}}) \mid i\in [3n]\} \cup$
                \item $\{ (_1h_{j_{i2}}^{l_{i2}},_1x_{j_{i2}}^{l_{i2}}),(_2h_{j_{i2}}^{l_{i2}},_3x_{j_{i2}}^{l_{i2}}),(_3h_{j_{i2}}^{l_{i2}},_4x_{j_{i2}}^{l_{i2}}) i\in [3n],l\in [3]\} \cup$
                \item $\{ (d_{j_{i2}}^{l_{i2}},y_{j_{i2}}^{l_{i2}}) \mid i\in [3n] \}$.
    ﻿
                \item For $j\le n$, $l\in [3]$, we add the edges $\{ (_1c_j^l,_1g_j^l),(_2c_j^l,_3x_j^l),(_3c_j^l,_2g_j^l),(_4c_j^l,_3g_j^l)\} \cup \{ (_1h_j^l,_2x_j^l),(_2h_j^l,_1x_j^l),(_3h_j^l,_4x_j^l)\} \cup \{ (d_j^l,y_j^l)\}$.
                \item For those $j>n$, $l\in [3]$, such that $(j,l)\ne (j_{i2},l_{i2})$ for any $i\in [3n]$, we add the edges $\{ (_1c_j^l,_2g_j^l),(_2c_j^l,_1g_j^l),(_3c_j^l,_3g_j^l),(_4c_j^l,y_j^l)\} \cup \{ (_1h_j^l,_2x_j^l),(_2h_j^l,_1x_j^l),(_3h_j^l,_3x_j^l), (d_j^l,_4x_j^l)\}$.
    ﻿
         \end{itemize}
            Let edges of $M_3$ be
            \begin{itemize}
                \item
    ﻿
            $\{ (s_j^l,o_j^l) \mid j\in [3n],l\in [3]\} \cup$
            \item $\{ (_3c_{j_{i3}}^{l_{i3}},a_i),(_1c_{j_{i3}}^{l_{i3}},_2g_{j_{i3}}^{l_{i3}}),(_2c_{j_{i3}}^{l_{i3}},_1g_{j_{i3}}^{l_{i3}}),(_4c_{j_{i3}}^{l_{i3}},_3g_{j_{i3}}^{l_{i3}})   \mid i\in [3n] \} \cup$
            \item $\{ (z_i,_2x_{j_{i3}}^{l_{i3}}) \mid i\in [3n]\} \cup$
            \item $\{ (d^l_{i3j_{i3}}, y^l_{i3j_{i3}})\mid i\in [3n]\}\cup$
            \item $\{ (_1h_{j_{i3}}^{l_{i3}},_1x_{j_{i3}}^{l_{i3}}),(_2h_{j_{i3}}^{l_{i3}},_3x_{j_{i3}}^{l_{i3}}),(_3h_{j_{i3}}^{l_{i3}},_4x_{j_{i3}}^{l_{i3}}) i\in [3n],l\in [3]\}$.
            \item For $j\le n$, $l\in [3]$, we add the edges $\{ (_1c_j^l,_2g_j^l),(_2c_j^l,_1g_j^l),(_3c_j^l,_3g_j^l)\} \cup \{ (_1h_j^l,_2x_j^l),(_2h_j^l,_1x_j^l),(_3h_j^l,_3x_j^l)\} \cup \{ (d_j^l,_4x_j^l),(_4c_j^l,y_j^l)\}$. \item For those $j>n$, $l\in [3]$, such that $(j,l)\ne (j_{i3},l_{i3})$ for any $i\in [3n]$, we add the edges $\{ (_1c_j^l,_2g_j^l),(_2c_j^l,_1g_j^l),(_3c_j^l,_3g_j^l),(_4c_j^l,y_j^l)\} \cup \{ (_1h_j^l,_2x_j^l),(_2h_j^l,_1x_j^l),(_3h_j^l,_3x_j^l), (d_j^l,_4x_j^l)\}$.
    \end{itemize}
    ﻿
    ﻿
    ﻿
            It is easy to check that the edges with weight $\frac{1}{3}$ are included in exactly $1$ matchings, the ones with weight $\frac{2}{3}$ are included in exactly $2$ matchings, the edges with weight $0$ are included in none of the matchings. Therefore, $p=\frac{1}{3}(M_1+M_2+M_3)$.
    ﻿
    ﻿
            It only remains to show that $M_1,M_2$ and $M_3$ are weakly stable matchings.
    ﻿
            Let us start with $M_1$.
            All items get a top priority agent, except $_1x_j^l$ for $j\in \{ 1,\dots n\}$, $l\in [3]$. However, for these items, their two better agents $_2h_j^l$ and $_1c_j^{l-1}$ are with an at least as good object $_3x_j^l$ and $_2g_j^l$ respectively.
    ﻿
            In both $M_2$ and $M_3$ all agents get a top item, except from the $_1c_j^l$ agents in the gadgets $G_j^l$, where $_2c_j^l$ is matched to $_3x_j^l$. However, for these agents, both of their better objects $_2g_j^l$ and $_1x_j^{l+1}$ are matched to an agent with at least as high priority, because $_1x_j^{l+1}$ is with $_2h_j^{l+1}$ in both $M_2$ and $M_3$.
    ﻿
            This shows that $p$ is indeed ex-post stable.
        \end{proof}
    ﻿
        This completes the proof of the theorem.
    ﻿
    \end{proof}

\section{Algorithmic results}

\new{First of all, we show that the assumption on $|O|=|N|=n$ and the preferences and priorities being complete is without loss of generality from an algorithmic viewpoint. }

\begin{lemma}
    Given an instance $I$ with ties and incomplete preferences and priorities and a random allocation $p$, we can create an instance $I'$ with $|O'|=|N'|$, complete preferences and priorities and a bistochastic random allocation $p'$ in polynomial time, such that
    \begin{itemize}
        \item $p$ is ex-post stable/ex-post strongly stable/robust ex-post stable in $I$ if and only if $p'$ is ex-post stable/ex-post strongly stable/robust ex-post stable in $I'$,
        \item from a decomposition of $p'$ into weakly/strongly stable matchings in $I'$, we can create a decomposition of $p$ into weakly/strongly stable matchings in $I$ in polynomial time.
    \end{itemize}
\end{lemma}
\begin{proof}

Given an arbitrary instance $I$, where the preferences can be incomplete and the number of agents and items can be different, we create an instance $I'$ with $|O'|=|N'|$ and complete preferences as follows.

\new{
We start by creating a dummy item $o_i$ for each agent $i\in N$ and let $O':=O\cup \{ o_i\mid i\in N\}$. Each agent's preference is extended by first appending $o_i$ being strictly worse than the originally acceptable items and then the rest of $O'$ in a single tie, but strictly worse than $o_i$. Then, we add $|O'|-|N|$ dummy agents to $N$, who are indifferent between all items, obtaining $N'$. The priorities of the dummy items $o_i$ are $i\succ [N'\setminus i]$, where brackets indicate a single tie. Finally, the items from $O$ have their priorities extended by ranking the rest of the agents in a tie, but being strictly worse than any originally acceptable one. }

Let $k:=|N'|-|N|=|O'|-|N|$.
Then, we can extend a random matching $p$ to a bistochastic random allocation $p'$ in $I'$:
\begin{itemize}
    \item we set $p'(i,o)=p(i,o)$ for $(i,o)\in N\times O$  if $(i,o)$ is an acceptable pair,
    \item we set $p'(i,o)=0$ for $(i,o)\in N\times O$, if $(i,o)$ is not acceptable,
\item we set $p'(i,o_i)=1-\sum\limits_{o\in O: o\text{ is acceptable to } i}p(i,o)$ for $i\in N$ and $p(i,o_{i'})=0$ for $i\ne i'\in N$,
\item then, we set $p'(i',o) = \frac{1}{k}(1-\sum\limits_{i\in N}p'(i,o))$ for $i'\in N'\setminus N$.
\end{itemize}

First, we show that $p'$ is bistochastic. For $i\in N$, it is easy to see that $\sum_{o\in O'}p'(i,o)=1$. For $i'\in N'\setminus N$, we have $\sum_{o\in O'}p'(i',o)=\frac{1}{k}(|O'|-\sum_{i\in N}\sum_{o\in O'}p'(i,o))=\frac{1}{k}(|O'|-|N|)=1$, as $\sum_{o\in O'}p'(i,o)=1$.
For $o\in O'$, we have $\sum_{i\in N'}p'(i,o)=\sum_{i\in N}p'(i,o)+\sum_{i'\in N'\setminus N}p'(i',o) )=\sum_{i\in N}p'(i,o)+ (|N'|-|N|)\frac{1}{k}(1-\sum_{i\in N}p'(i,o))=1$. 

Next, we show that from a decomposition $p=\sum_l \lambda_lM_l$, where each $M_l$ is weakly (resp. strongly) stable, we can create a decomposition of $p'$ such that it only uses weakly (resp. strongly) stable matchings of $I'$. We do this as follows. For each $M_l$, we create $k=|N'|-|N|$ matchings $M_l^1,\dots, M_l^k$ each with weight $\frac{\lambda_l}{k}$. All of them contain all edges of $M_l$. Then, for the unmatched agents $i\in N$, we match $i$ to $o_i$ in all of them. Finally, if the unassigned objects after this are $o_{j_1},\dots, o_{j_k}$ with $j_1<\cdots <j_k$, then in $M_l^z$, the $i$-th dummy agent is mathched to $o_{j_{z+l+i}}$, where $z+l$ is taken modulo $k$. We claim that $p'=\sum_l \sum_{z=1}^k\frac{\lambda_l}{k}M_l^z$, which is a convex combination as $\sum_l \lambda_l=1$.

For $i\in N$ and $o\in O$ it is easy to see that $p'(i,o)=p(i,o)=\sum_l \lambda_l M_l(i,o)=\sum_l\sum_{z=1}^k\frac{\lambda_l}{k}M_l^z(i,o)$. 

For $i\in N$ and $o'\in O'\setminus O$, we have that if $o'\ne o_i$, both $p'(i,o')$ and $\sum_l\sum_{z=1}^k\frac{\lambda_l}{k}M_l^z(i,o)$ are 0, otherwise $$\sum_l\sum_{z=1}^k\frac{\lambda_l}{k}M_l^z(i,o_i)=\sum_l \lambda_l M_l(i,o_i) = \sum\limits_{l: i \text{ is unmatched in } M_l}\lambda_l=$$ $$=1-\sum\limits_{l:i\text{ is matched in }M_l}\lambda_l=1-\sum\limits_{l:i\text{ is matched in }M_l}\lambda_l\cdot\sum\limits_{o:o\text{ is acceptable to }i}M_l(i,o)=$$ $$=1-\sum\limits_{o:o\text{ is acceptable to }i}p(i,o)=p'(i,o_i)$$. 

For $i'\in N'\setminus N$ and $o\in O$, we have that $$\sum_l \sum_{z=1}^k\frac{\lambda_l}{k}M_l^z(i',o)=\frac{1}{k}\sum\limits_{l:(o \text{ is unassigned in }M_l)\wedge (o\ne o_i \text{ for any unassigned }i \text{ in }M_l)}\lambda_l=$$  $$=\frac{1}{k}(1-\sum\limits_{l:(o \text{ is assigned in }M_l)\vee (o= o_i \text{ for some unassigned }i \text{ in }M_l)}\lambda_l)=$$

$$\frac{1}{k}(1-\sum\limits_{l:(o \text{ is assigned in }M_l)\} }\lambda_l)=\frac{1}{k}(1-\sum\limits_{i\in N: (i,o) \text{ is acceptable }}p(i,o))=$$ $$=\frac{1}{k}(1-\sum\limits_{i\in N}p'(i,o))=p'(i',o),$$

where we used that $o\ne o_i$ for any $i\in N$ and $\sum_{z=1}^kM_l^z(i,o)=1$, if $l$ is such that $(o \text{ is unassigned in }M_l)\wedge (o\ne o_i \text{ for any unassigned }i \text{ in }M_l)$ and $0$ otherwise.

Finally, for $i''\in N'\setminus N$ and $o=o_i\in  O'\setminus O$, we have that $$\sum_l \sum_{z=1}^k\frac{\lambda_l}{k}M_l^z(i'',o_i)=\frac{1}{k}\sum\limits_{l:(o_i \text{ is unassigned in }M_l)\wedge (o_i\ne o_{i'} \text{ for any unassigned }i' \text{ in }M_l)}\lambda_l=$$  $$=\frac{1}{k}(1-\sum\limits_{l:(o_i \text{ is assigned in }M_l)\vee (i \text{ is unassigned }i \text{ in }M_l)}\lambda_l)=$$ $$=\frac{1}{k}\{1-(1-\sum_{o'\in O: (i,o') \text{ is acceptable}}p(i,o'))\}=$$ 
$$=\frac{1}{k}(1-p'(i,o_i))=\frac{1}{k}(1-\sum_{i'\in N}p'(i',o_i))=p'(i'',o_i),$$

where we used that $o_i$ is never assigned in $M_l$ and that if $i'\in N$ is not $i$, then $p(i',o_i)=0$.

If there is a weakly (resp. strongly) blocking pair $(i,o)$ to some $M_l^z$, then $i\in N$, otherwise $i$ is indifferent and $o$ is with an agent of at least as high priority, so they cannot even weakly block. Furthermore, as $i$ is either with the same item in $M_l$ and $M_l^z$, or $i$ is unassigned in $M_l$ and with $o_i$ in $M_l^z$, we get that $(i,o)$ must be an acceptable pair in $I$, so $i$ prefers it in $I$ too. Also, $o$ is either with the same agent in $M_l$ and $M_l^z$, or is unmatched in $M_l$ and with a worst priority agent in $M_l^z$, so we get that $(i,o)$ weakly (resp. strongly) blocks $M_l$ contradiction. It is also easy to see that if $(i,o)$ was a blocking pair to $M_l$, then $(i,o)$ will be a blocking pair to each $M_l^z$ too. Hence, $M_l$ is weakly (resp. strongly) stable in $I$ if and only if $M_l^z$ is weakly (resp. strongly) stable in $I'$ for $z\in[k]$.

Hence, if $p$ is ex-post stable or ex-post strongly stable, then so is $p'$. Also, if $p$ is not robust ex-post stable, then neither is $p'$.

In the other direction, suppose that we have a decomposition of $p'=\sum_l\lambda_lM_l'$ in $I'$. For each $M_l'$, let $M_l$ be its restriction to $I$. We claim that $p=\sum_l\lambda_lM_l$ and $M_l$ is weakly (resp. strongly) stable if and only if $M_l'$ is weakly (resp. strongly) stable. Firstly, $p(i,o)=p'(i,o)=\sum_l \lambda_l M_l'(i,o)=\sum_l\lambda_l M_l(i,o)$, as $(i,o)$ is an acceptable pair in $I$.  

Assume that $(i,o)$ weakly (resp. strongly) blocks $M_l'$. As all agents $i\in N$ must be matched to originally acceptable items or their dummy item in $M_l'$ by the consistency of the matching $M_l'$ with $p'$ in the decomposition and dummy agents cannot even weakly block a perfect matching in $I'$ (they are indifferent among $O'$ and have worst priority for all items), only $(i,o)$ pairs with $i\in N,o\in O$ that are originally acceptable can block by the construction. Hence, $(i,o)$ must weakly (resp. strongly) block $M_l$ too. Reversely, if $(i,o)$  weakly (resp. strongly) blocks $M_l$, then it is easy to see that it weakly (resp. strongly) blocks $M_l'$ too.

Hence, if $p'$ is ex-post stable or ex-post strongly stable in $I'$, then so is $p$ in $I$. Also, if $p'$ is not robust ex-post stable in $I$, then neither is $p$ in $I$.

As $p'$ and $I'$ can be constructed efficiently from $p$ and $I$, the statement follows.
\end{proof}

\subsection{An integer programming approach for ex-post stability}

\new{Even though we have shown the problem to be NP-hard, in this section we provide a method to decide ex-post stability.
}

\new{
We give an algorithm based on integer programming, which, given a random matching $p$, finds a convex combination of perfect matchings $M_1,\dots, M_k$, such that $p= \sum_l \lambda_l M_l$ and $\sum\limits_{l: M_l \text{ is weakly stable}} \lambda_l$ is maximal. }

\new{
Our algorithm proceeds in two steps. First, we compute the maximum value $|\lambda |_1= \sum_{l=1}^{n^2+1}\lambda_l$ such that there are weakly stable matchings $M_1,\dots, M_{n^2+1}$ with $\sum_{l=1}^{n^2+1}\lambda_lM_l \le p$ (coordinate-wise) with the following integer program.}

\begin{align*}
&\text{max} & \sum_{l=1}^{n^2+1} \lambda_l  & &\\
    &\text{s.t.} &\sum_{o'\succsim_i o}M_l(i,o')+ \sum_{j\succsim_{o}i;}M_l(j,o) & \geq 1 +M_l(i,o) & ( (i,o)\in N\times O, l\in [n^2+1]   )\\
&&z_l(i,o) &\leq \lambda_l &((i,o)\in N\times O, l\in[n^2+1])\\
   && z_l(i,o) &\leq M_l(i,o) &((i,o)\in N\times O, l\in[n^2+1])\\
    && z_l(i,o) &\geq \lambda_l - (1-M_l(i,o)) &((i,o)\in N\times O, l\in[n^2+1])\\
    && z_l(i,o) &\geq 0 &((i,o)\in N\times O, l\in[n^2+1])\\
    &&\sum_{l=1}^{n^2+1} z_l(i,o)& \le p(i,o) & ((i,o) \in N\times O) \\
&& \sum_o M_l(i,o) & =1 & (i\in N) \\
&& \sum_iM_l(i,o) & =1 & (o\in O)\\
&&\lambda_l & \geq 0  & (l\in [n^2+1])\\
&&M_l(i,o) & \in \{ 0,1\} & ((i,o)\in N\times O, l\in [n^2+1])  
\end{align*}

The first contraint is just the well known stability constraint that ensures that each $M_l$ corresponds to a weakly stable matching.
If $M_l(i,o)=1$, then the constraints on $z_l(i,o)$ enforce $z_l(i,o)= \lambda_l$. If $M_l(i,o)=0$, then the constraints enforce $z_l(i,o)= 0$. Hence, $\sum_{l=1}^{n^2+1}\lambda_lM_l(i,o)=\sum_{l=1}^{n^2+1}z_l(i,o)\le p(i,o)$ ensures that the linear combination of the weakly stable matchings found by the IP is consistent with $p$. 

\new{Then, in the second step, we write $\frac{1}{1-|\lambda |_1}(p-\sum_{l=1}^{n^2+1}\lambda_lM_l)$ as a convex combination of deterministic matchings and then combine these two linear combinations into one, see Algorithm 1. }

\begin{algorithm}\label{alg:1}
\caption{Decomposing a random matching $p$ with maximum probability of being weakly stable}\label{alg:cap}
\begin{enumerate}

 \item \new{Compute an optimal solution $(\lambda^*,M^*,z^*)$ of the integer program.}
 \item
 \new{Let $p'=\frac{1}{1-|\lambda^*|_1}(p-\sum_{l=1}^{n^2+1}\lambda_l^* M^*_l)$. This again gives a bistochastic matrix.}

\item \new{Compute a decomposition of $p'$ into $n^2$ perfect matchings using the Birkhoff-Neumann algorithm: $p'=\sum_{l=n^2+2}^{2n^2+1}\lambda_l M_l$.}

\item \new{\textbf{Return} $(M^*_1,\dots,M_{n^2+1}^*,M_{n^2+2}, \dots,  M_{2n^2+1})$ and $(\lambda^*_1,\dots,\lambda^*_{n^2+1},(1-|\lambda^*|_1)\cdot \lambda_{n^2+2},\dots, (1-|\lambda^*|_1)\cdot \lambda_{2n^2+1})$}

\end{enumerate}
\end{algorithm}

\begin{theorem}
    \new{Algorithm 1 finds a decomposition $p=\sum_l\lambda_l^*M_l^*$ into perfect deterministic matchings $M_l^*$ such that $\sum\limits_{l: M_l \text{ is weakly stable}} \lambda_l^*$ is maximal, via an integer program with $\mathcal{O}(n^4)=\mathcal{O}(|N|^2|O|^2)$ constraints and variables.}
\end{theorem}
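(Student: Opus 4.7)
The plan splits into two parts: verifying that Algorithm~1 outputs a valid convex decomposition of $p$, and showing that no decomposition of $p$ places more total weight on weakly stable matchings than $|\lambda^*|_1$.

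\textbf{Feasibility.} Let $w := |\lambda^*|_1$. Summing the IP constraint $\sum_l \lambda_l^* M_l^*(i,o) \le p(i,o)$ over $o$ and using that each $M_l^*$ is a perfect matching forces $w \le 1$. If $w = 1$, the residual $p - \sum_l \lambda_l^* M_l^*$ is identically zero and Step~1 already produced a full convex decomposition. Otherwise $1-w > 0$, and a direct row/column sum computation shows that $p' = \tfrac{1}{1-w}(p - \sum_l \lambda_l^* M_l^*)$ is non-negative with all row and column sums equal to $1$, hence bistochastic. Birkhoff--von Neumann then decomposes $p'$ into a convex combination of at most $n^2$ deterministic perfect matchings, and substituting back recovers exactly the convex combination returned by the algorithm. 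Moreover, the IP stability constraint forces every $M_l^*$ to be weakly stable (irrespective of whether $\lambda_l^* > 0$), so the total weight on weakly stable matchings in the output is exactly $|\lambda^*|_1$.

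\textbf{Optimality via Caratheodory.} Suppose some decomposition $p = \sum_l \mu_l N_l$ into deterministic perfect matchings assigns total weight $W = \sum_{l : N_l \text{ weakly stable}} \mu_l$ to weakly stable matchings. The partial sum $S := \sum_{l : N_l \text{ weakly stable}} \mu_l N_l$ satisfies $S \le p$ coordinate-wise, and $S/W$ lies in the convex hull of the weakly stable perfect matchings viewed as $0/1$ vectors in $\mathbb{R}^{n \times n}$. By Caratheodory's theorem applied in this $n^2$-dimensional ambient space, $S/W$ is a convex combination of at most $n^2+1$ weakly stable matchings $\widetilde M_1, \dots, \widetilde M_{n^2+1}$ with coefficients $\alpha_1, \dots, \alpha_{n^2+1}$. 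Setting $\widetilde \lambda_l = W \alpha_l$ yields a feasible point of IP~\eqref{IP:start}--\eqref{IP:end} with objective value $W$, so $|\lambda^*|_1 \ge W$. Combining the two bounds proves the claim.

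The only genuinely non-routine ingredient is the Caratheodory step, which is what justifies restricting the IP's support to $n^2+1$ matchings without loss of generality: an optimal decomposition of $p$ may in principle use exponentially many matchings, and we must truncate its stable portion to a short list that the IP can express. The rest of the argument is mechanical: checking non-negativity and row/column sums for $p'$, invoking Birkhoff--von Neumann, and observing that the stability constraints in the IP hold uniformly across all indices $l$.
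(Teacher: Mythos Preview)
Your proof is correct and follows essentially the same route as the paper: feasibility of the output decomposition plus the Carath\'eodory argument showing that the stable portion of any competing decomposition can be compressed into an IP-feasible solution of the same objective value. You give considerably more detail on feasibility (the $w\le 1$ bound, the $w=1$ edge case, and the bistochasticity of $p'$) than the paper, which simply asserts it; the only cosmetic point is that ``exactly $|\lambda^*|_1$'' in your feasibility paragraph is, at that stage, only ``at least $|\lambda^*|_1$'' (some Birkhoff matching might accidentally be stable), but equality then follows from your optimality step, so the argument closes correctly.
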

\vspace{-12pt} \begin{proof}
\new{It is clear that the decomposition returned by the algorithm consists only of perfect matchings and the coefficients sum up to 1, hence it is a feasible convex decomposition.
We only need to show that there is no other decomposition $p'=\sum_l\lambda_l'M_l'$, such that $\sum\limits_{l: M_l' \text{ is weakly stable}} \lambda_l'  >\sum\limits_{l: M_l^* \text{ is weakly stable}} \lambda_l^*$. } 

\new{Suppose for the contrary, that there is such a decomposition $p'$ and let $\hat{p}=\sum\limits_{l: M_l' \text{ is weakly stable}} \lambda_l' M_l'$, and let $\lambda_s'=\{ \lambda_l' \mid  M_l' \text{ is weakly stable} \}$. Then, $\frac{1}{|\lambda_s'|_1}\hat{p}$ is in the convex hull of weakly stable matchings. Therefore, by Caratheodory's theorem, we get that there are $n^2+1$ matchings $M_1'',\dots, M_{n^2+1}''$ with $(\lambda_1'',\dots, \lambda_{n^2+1}'')$, such that $\frac{1}{|\lambda_s'|_1}\hat{p} = \sum_{l=1}^{n^2+1}\lambda_l''M_l''$ and $|\lambda''|_1=1$. Hence, $\hat{p}= \sum_{l=1}^{n^2+1}|\lambda_s'|_1\lambda_l''M_l''$, so we can construct a solution $(\hat{\lambda},\hat{M},\hat{z})$ to the integer program such that $|\lambda_s^*|_1<|\lambda_s'|_1 =|\hat{\lambda}|_1$ (where $\lambda_s^*=\{ \lambda_l^* \mid  M_l^* \text{ is weakly stable} \}$), a contradiction. }
\end{proof}

\subsection{Stronger Versions of Ex-post Stability}

In this section, we consider stronger versions of ex-post stability.

\subsubsection{Robust Ex-post Stability}
  
%

\begin{theorem}\label{th:robust}
Checking whether a given random matching $p$ is robust ex-post stable is polynomial-time solvable. Furthermore, if the answer is yes, then an ex-post stable decomposition can be found by the Birkhoff-Neuman algorithm.
\end{theorem}
\vspace{-12pt} \begin{proof}

For each $i\in N$ and $o\in O$, we check whether there exists an integral matching $M$ consistent with $p$ such that $i$ is not matched to $o$ in $M$ and $(i,o)$ form a blocking pair for $M$. 

If $p$ is robust ex-post stable, then there cannot be any such matching $M$, because then we can extend it to a decomposition of $p$ that contains an unstable matching. This holds because if $\varepsilon>0$ is small enough such that each edge $e$ of $M$ satisfies $p(e)\ge \varepsilon$, then decreasing $p(e)$ by $\varepsilon$ for each $e\in M$ and multiplying all $p(e)$ values by $\frac{1}{1-\varepsilon}$ we again get a random matching, which we can decompose by the Birkhoff-Neumann theorem. Finally, we can combine this decomposition with $M$ by multiplying the weights with $1-\varepsilon$ and adding $M$ with weight $\varepsilon$.

Conversely, if $p$ is not robust ex-post stable, then there is a decomposition with an unstable matching $M$, which must be consistent with $p$ and contain a blocking edge $(i,o)$.

Hence, indeed it is enough to check for each $(i,o)$ pair with $p(i,o)>0$, whether there is a matching $M$ consistent with $p$ that is blocked by $(i,o)$.
This can be checked by testing whether there exists a perfect matching in the bipartite graph of the edges $e$ with $p(e)>0$, after deleting the edges $(i,o')$ for which $o'\succeq_i o$ and the edges $(i',o)$ for which $i'\succeq_o i$ holds.  

Since, when $p$ is ex-post stable, any decomposition of $p$ contains only weakly stable matchings, the Birkhoff-Neumann algorithm can find such a decomposition.
\end{proof}

 \subsubsection{Ex-post Strong Stability}

%
%

In this section, we consider a stability concept called ex-post strong stability which is based on the concept of \textit{strong stability}.

 \begin{proposition}
	 Fractional strong stability implies fractional stability.
	 \end{proposition}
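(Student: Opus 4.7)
The plan is to fix an arbitrary pair $(i,o)\in N\times O$ and show that the left-hand side of the fractional stability inequality at $(i,o)$ is bounded below by the left-hand side of condition (1) (equivalently, (2)) of fractional strong stability at $(i,o)$, hence is at least $1$.

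The key bookkeeping step is to split each weak-preference sum into a strict part and an indifference part. Concretely, I would write
\[
\sum_{o'\succsim_i o} p(i,o') \;=\; \sum_{o'\succ_i o} p(i,o') \;+\; \sum_{o'\sim_i o} p(i,o'),
\]
\[
\sum_{j\succsim_o i} p(j,o) \;=\; \sum_{j\succ_o i} p(j,o) \;+\; \sum_{j\sim_o i} p(j,o).
\]
Because $i\sim_o i$, the term $p(i,o)$ appears in $\sum_{j\sim_o i} p(j,o)$, so
\[
\sum_{j\sim_o i} p(j,o) \;\geq\; p(i,o),
\]
which I would use to absorb the $-p(i,o)$ correction in the fractional stability constraint.

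Substituting the two decompositions into the fractional stability left-hand side and applying the above bound gives
\[
\sum_{o'\succsim_i o} p(i,o') + \sum_{j\succsim_o i} p(j,o) - p(i,o) \;\geq\; \sum_{o'\succ_i o} p(i,o') + \sum_{o'\sim_i o} p(i,o') + \sum_{j\succ_o i} p(j,o),
\]
and the right-hand side here is exactly the left-hand side of condition (1) of fractional strong stability for the pair $(i,o)$, so it is at least $1$. This yields the fractional stability inequality for $(i,o)$; since $(i,o)$ was arbitrary, fractional stability follows.

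There is no real obstacle here; the entire argument is one line of careful term accounting, and symmetrically one could instead use condition (2) together with the bound $\sum_{o'\sim_i o} p(i,o') \geq p(i,o)$ (which holds because $o\sim_i o$) if one prefers to absorb the $-p(i,o)$ on the agent side rather than the item side.
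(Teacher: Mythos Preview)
Your proof is correct and follows essentially the same approach as the paper's: both arguments use condition (1) of fractional strong stability, split the weak-preference sums into strict and indifference parts, and exploit that $p(i,o)$ sits inside the indifference term to cancel the $-p(i,o)$. Your write-up is in fact more explicit than the paper's one-line version, which simply rewrites condition (1) as $\sum_{o'\succsim_i o; o'\neq o}p(i,o')+ \sum_{i'\succ_o i}p(i',o)+ p(i,o)\ge 1$ and asserts that fractional stability follows.
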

\vspace{-12pt} \begin{proof}

	Suppose, the first condition of fractional strong stability is satisfied: for all $(i,o)$, $\sum_{o'\succ_i o}p(i,o')+ \sum_{i'\succ_o i}p(i',o)+\sum_{o'\sim_{i} o}p(i,o')\ge 1$. Then, for all $(i,o)$,
	$\sum_{o'\succsim_i o; o'\neq o}p(i,o')+ \sum_{i'\succ_o i}p(i',o)+ p(i,o)\ge 1$ which means that fractional stability is satisfied.
	\end{proof}

Next, we establish an equivalence between ex-post strong stability and  fractional strong stability.

 \begin{lemma}[Theorem 13 of \citet{Kuny18a}]\label{lemma:convexhull}
The following are equivalent. A random matching
\begin{enumerate}
\item satisfies fractional strong stability
\item is in the convex hull of deterministic strongly stable matchings.

\end{enumerate}
	 \end{lemma}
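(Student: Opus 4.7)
The proof has two directions. For (2) $\Rightarrow$ (1): suppose $p = \sum_k \lambda_k M_k$ where each $M_k$ is a deterministic strongly stable matching. Each $M_k$, viewed as a $0/1$ matrix, satisfies both defining inequalities of fractional strong stability, because these inequalities are exactly the algebraic reformulation of the ``no strongly blocking pair'' condition when specialised to an integral matching. Since the inequalities are linear in $p$ and the coefficients $\lambda_k$ are nonnegative with $\sum_k \lambda_k = 1$, the inequalities transfer to $p$.

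For the nontrivial direction (1) $\Rightarrow$ (2), my plan is a Birkhoff--von Neumann style iterative decomposition. Given a bistochastic $p$ satisfying fractional strong stability, I would extract a deterministic strongly stable matching $M$ whose support lies inside $\{(i,o) : p(i,o) > 0\}$, subtract $\alpha M$ from $p$ for the largest $\alpha > 0$ such that $p - \alpha M \geq 0$, and recurse on $\tfrac{1}{1-\alpha}(p - \alpha M)$ after verifying that the rescaled residual is again bistochastic and fractional strongly stable. Termination in finitely many steps follows from the standard argument that extracting a matching removes at least one edge from the support; combining the extracted matchings with their weights produces the desired convex combination.

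The heart of the argument is the following key claim: whenever $p$ is fractional strongly stable, the bipartite subgraph $G_p = \{(i,o) : p(i,o) > 0\}$ contains (the support of) a deterministic strongly stable matching of the original instance. I would prove this by restricting the preferences and priorities to $G_p$ and running the constructive algorithm of \citet{KMMP07a} for strong stability on the restricted instance; the two fractional strong stability inequalities, when $p(i,o) > 0$, force enough structure into $G_p$ to guarantee that the algorithm never gets stuck. One also has to check closure under the iteration: subtracting $\alpha M$ preserves both fractional strong stability inequalities, which follows because $M$ itself satisfies them with equality on every tight constraint, so tight constraints stay tight in the residual and non-tight ones retain a positive slack.

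The main obstacle is the key claim above: the classical obstruction to the existence of a strongly stable matching is a certain odd alternating structure in the acceptability graph \citep{KMMP07a}, and I would need to show that the fractional strong stability inequalities rule out such an obstruction from appearing inside $G_p$. Carrying out this deficiency analysis, together with the bookkeeping needed to track both the agent-side and item-side strong stability inequalities through the iteration, is the technically demanding step, and it is precisely the content of Theorem 13 of \citet{Kuny18a} that we may invoke to close the argument.
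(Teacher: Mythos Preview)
Your direction (2) $\Rightarrow$ (1) is fine. The problems are all in (1) $\Rightarrow$ (2).

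First, the last sentence is circular: the lemma you are proving \emph{is} Theorem~13 of \citet{Kuny18a}, so invoking that theorem to ``close the argument'' assumes the conclusion. Once that sentence is removed, what remains is a Birkhoff--von~Neumann peeling scheme whose two load-bearing steps---the existence of a strongly stable $M$ inside the support of $p$, and closure of fractional strong stability under the passage $p\mapsto \tfrac{1}{1-\alpha}(p-\alpha M)$---are only asserted.

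Second, the closure step is not as innocent as you indicate. For a deterministic strongly stable $M$ and a pair $(i,o)$, the left-hand side of inequality~(1) evaluates to $[\,M(i)\succsim_i o\,]+[\,M^{-1}(o)\succ_o i\,]\in\{1,2\}$; nothing forces the value~$1$ on constraints that happen to be tight for $p$. If $M$ scores~$2$ on such a constraint while $p$ sits only slightly above~$1$, the rescaled residual drops below~$1$ and fractional strong stability is lost. To rescue the scheme you would have to exhibit an $M$ that is simultaneously supported in $G_p$, strongly stable in the \emph{original} instance (not just the restricted one---edges with $p(i,o)=0$ can still block), and tight on every tight constraint of $p$. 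Proving that such an $M$ exists is essentially the whole theorem.

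The paper (following Kunysz) does not iterate at all. It uses the Teo--Sethuraman interval construction: self-duality of the strong-stability LP plus complementary slackness give that whenever $p(i,o)>0$ both defining inequalities hold with \emph{equality}; one then lays out the numbers $p(i,\cdot)$ and $p(\cdot,o)$ as subintervals of $(0,1]$ in opposite preference order, and for each $u\in(0,1]$ reads off a deterministic strongly stable matching $M_u$ (via an auxiliary bipartite graph $H_u$ when ties are present). The decomposition $p=\sum_t (x_t-x_{t-1})M_{x_t}$ comes out in one shot, with no residual to control. The LP-duality/complementary-slackness step is precisely what replaces the fragile closure argument in your proposal.
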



%

 \begin{theorem}\label{th:strongfrac}
For weak preferences and priorities, there exists a polynomial-time algorithm to test ex-post strong stability and in case the answer is yes, there is a polynomial-time algorithm to find its representation as a convex combination of strongly stable deterministic  matchings.
	 \end{theorem}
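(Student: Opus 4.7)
My plan is to reduce both tasks to the equivalence provided by \Cref{lemma:convexhull}. By that lemma, a random matching is ex-post strongly stable iff it is fractionally strongly stable, and the latter is just the system of $O(n^2)$ linear inequalities in the definition of fractional strong stability (two per pair $(i,o)$), together with the bistochasticity equations \eqref{LE1}--\eqref{LE3}. Testing membership is therefore immediate: plug $p$ in and verify all inequalities in linear time. So the decision part of the theorem follows from \Cref{lemma:convexhull}.

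For the constructive part I would iteratively peel vertices off $p$ using linear programming. Let $P$ denote the polytope cut out by the bistochasticity constraints and the fractional strong stability inequalities. By \Cref{lemma:convexhull}, $P$ equals the convex hull of characteristic vectors of deterministic strongly stable matchings, so in particular every vertex of $P$ is an integral strongly stable matching. Starting from $p\in P$:
\begin{enumerate}
\item Compute a vertex $v$ of $P$ that lies on the minimal face $F$ containing $p$. Concretely, identify all constraints of $P$ that are tight at $p$ (including the support constraints $x(i,o)=0$ for pairs with $p(i,o)=0$), turn them into equalities, and solve an LP over the resulting system to obtain an extreme point; standard techniques (e.g., solving an auxiliary LP with a generic objective, or lexicographic perturbation) return such an extreme point in polynomial time.
\item Compute $\alpha := \max\{\alpha'\ge 0 : (p-\alpha' v)/(1-\alpha')\in P\}$. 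Since $v$ and $p$ lie on the same face $F$, moving from $p$ in the direction $-v$ (then renormalizing) stays in $F$ for a positive range, so $\alpha>0$ unless $p=v$ already. The maximum step size is determined by the first additional inequality of $P$ that becomes tight, which is a minimum-ratio computation.
\item Record the pair $(v,\alpha)$, replace $p$ by $(p-\alpha v)/(1-\alpha)$, and repeat.
\end{enumerate}
At each iteration the residual lies on a strictly smaller face of $P$ (at least one new inequality becomes tight), so the procedure terminates after at most a polynomial number of rounds. Combining the recorded vertices with weights rescaled by the product of $(1-\alpha)$ factors produces the desired convex combination of strongly stable matchings.

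The step I expect to be most delicate is step~1: one must be sure that the extreme point returned really lies on the face $F$ containing $p$, so that step~2 makes progress with $\alpha>0$. This is a standard but nontrivial issue in LP-based decomposition. The safe way is to formulate the LP with equality constraints for every inequality of $P$ that is tight at $p$ (including the support constraints), and to find an extreme optimal solution of this restricted LP — which is automatically a vertex of $P$ inside $F$, hence, by \Cref{lemma:convexhull}, a deterministic strongly stable matching. The rest (terminating in $\mathrm{poly}(n)$ iterations, correctness of the weighted sum) follows from the standard Carathéodory-style argument, so the overall algorithm runs in polynomial time both for testing ex-post strong stability and for producing an explicit decomposition.
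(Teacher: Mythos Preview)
Your proposal is correct. The testing half coincides with the paper's argument: both invoke \Cref{lemma:convexhull} and simply check the $2|N|\times|O|$ linear inequalities defining fractional strong stability.

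For the decomposition, you take a genuinely different route. The paper appeals to the explicit combinatorial algorithm of Kunysz (extending Teo--Sethuraman): using self-duality of the fractional strong stability LP and complementary slackness, one lays out the values $p(i,o)$ as subintervals of $(0,1]$ for each agent (in decreasing preference order) and for each item (in increasing priority order); every point $u\in(0,1]$ then picks out a strongly stable matching $M_u$, and the $O(n^2)$ breakpoints of the interval arrangement yield the entire decomposition in one pass. Your argument instead treats the polytope $P$ as a black box, relying only on the fact (from \Cref{lemma:convexhull}) that its vertices are exactly the deterministic strongly stable matchings, and then runs the generic Carath\'eodory peeling: find a vertex on the minimal face of $p$, push $p$ away from it until a new constraint binds, recurse. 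This is valid and polynomial, but it solves an LP at every iteration, whereas the Kunysz construction reads off all the matchings from a single sort. On the other hand, your approach is more portable: it would work unchanged for any stability notion admitting an analogue of \Cref{lemma:convexhull}, without having to devise a tailored interval construction.
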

	 \vspace{-12pt} \begin{proof}
		 Strong ex-post stability can be checked in polynomial time as follows. Strong ex-post stability is equivalent to fractional strong stability (Lemma~\ref{lemma:convexhull}). Fractional strong stability can be checked by considering $2|N|\times |O| $ inequalities used in the definition of fractional strong stability.
		 For a matching that satisfies fractional strong stability, it lies in the convex hull of the set of deterministic strongly stable matchings. Such a matching can be represented by a convex combination of strongly stable deterministic matchings by an algorithm of \citet{Kuny18a} that uses a similar argument as that of \citet{TeSe98a}.
		 \end{proof}

In the proof of Theorem~\ref{th:strongfrac}, we invoke an algorithmic result of \citet{Kuny18a}.  For the sake of completeness and exposition, we give a description of the algorithm of \citet{Kuny18a} .
The proposed algorithms  by \citet{TeSe98a}  and its extension by \citet{Kuny18a}  are  based on  self-duality of the polytope defined by fractional  strong stability.
By using the self-duality and complementary slackness property it was shown that if $p$ is an optimal solution and $p(i,o)>0$, then
\begin{enumerate}
	\item $\sum_{o'\succ_i o}p(i,o')+ \sum_{i'\succ_o i}p(i',o)+\sum_{o'\sim_{i} o}p(i,o')=1$
\item $\sum_{o'\succ_i o}p(i,o')+ \sum_{i'\succ_o i}p(i',o)+\sum_{i'\sim_{o} i}p(i',o)=1$
 \item $\sum_{i'} p(i',o)=1$
 \item $\sum_{o'} p(i,o')=1$

\end{enumerate}
For each $i$ and $o$, consider interval $I_i=(0,1]$ and $I_o=(0,1]$ that results into $2n$ intervals.
Corresponding to  each $p(i,o)$,  consider an interval of length $p(i,o)$ and by  abusing  the notation  denote the  interval by $p(i,o)$.
The intervals are also arranged in decreasing preference of $i$.
This means that  if $o\succ o'$, then interval $p(i,o)$ appears before $p(i,o')$. Notice that indifferent preferences are arranged arbitrary next to each other.
Since we have that  $\sum_{o'} p(i,o')=1$, then $\cup_{o'} p(i,o')=(0,1]$.
Similarly, define sub-intervals $p(i,o)$ for each $I_o=(0,1]$ and arrange them in increasing order.

First,  consider the case where  preferences are strict. Then,
 let $u\in (0,1]$ be an arbitrary number. Then, we get stable integral matching $M_u$ as follows:   $i$ gets matched to $o$ if $u$ belongs to interval $p(i,o)\subseteq I_i$. Moreover, $o$ gets matched to $i$ if $u$ belongs to interval $p(i,o)\subseteq I_o$. Notice that by the fact that sub-intervals in $I_i$ and $I_o$ are arranged in opposite way and
    \[\sum_{o'\succ_i o}p(i,o')+ \sum_{i'\succ_o i}p(i',o)+ p(i,o)=1.\]
One can observe that $M_u$ is an  integral matching.
By sub-intervals construction, each $I_i$ and $I_o$ is partitioned  to at most $n$ intervals which are determined by $n+1$ distinct numbers.
Since there are  $2n$ intervals, there are at most $2n(n+1)$ such numbers. Sort them as $0=x_0<x_1<\ldots<x_s=1$, where $s<2n(n+1)$. Teo and Sethuraman showed that $p=\sum_{t=1}^s(x_t-x_{t-1})\cdot M_{x_t}.$

Kunysz  slightly modified the  construction to handle the case where there are weak preferences. In this case one may  not be able  to construct an integral matching  $M_u$,  $u\in (0,1]$. Instead  he defined an auxiliary bipartite graph  $H_u$ and then showed that there exists matching $M_u$ in $H_u$. Then, finding the convex composition  follows as Teo and  Sethuraman's algorithm.

\section{Conclusion}

We undertook a study of testing stability of random matchings. Our central results are that testing ex-post stability is NP-complete even under severe restrictions. The computational hardness result also explains why a
combinatorially simple and tractable characterization has eluded mathematicians and economists. Nonetheless, we provided a way to find an optimal decomposition using integer programming.
%

We also considered stronger versions of ex-post stability and presented polynomial-time algorithms for testing them.
A natural research direction is to understand sufficient conditions on the preferences and priorities under which testing ex-post stability is polynomial-time solvable.
Yet another research problem is understanding the conditions under which stability concepts coincide.
Parameterized algorithms for the computationally hard problems is yet another research direction.

\section*{Acknowledgment}

The authors thank Onur Kesten, Isaiah Iliffe, Tom Demeulemeester, and M. Utku {\"{U}}nver for comments.
Aziz acknowledges the support by the  NSF-CSIRO grant on ‘Fair Sequential Collective Decision-Making’ (Grant number RG230833).
Bir\'o and Cs\'aji acknowledge the financial support by the Hungarian Academy of Sciences, Momentum Grant No. LP2021-2, and by the Hungarian Scientific Research Fund, OTKA, Grant No.\ K143858. Csáji acknowledges support by the Ministry
of Culture and Innovation of Hungary from the National Research, Development
and Innovation fund, financed under the KDP-2023 funding scheme (grant number C2258525).


\end{document}